\titleformat*{\subsubsection}{\bfseries}
\titleformat*{\paragraph}{\bfseries}
\DeclarePairedDelimiter\ceil{\lceil}{\rceil}
\newcommand{\cw}{\operatorname{cw}}
\newcommand{\CW}{\operatorname{CW}}
\newcommand{\eps}{\varepsilon}
\newcommand{\FF}{\mathbb{F}}
\newcommand{\RR}{\mathbb{R}}
\newcommand{\NN}{\mathbb{N}}
\newcommand{\prob}{\mathcal{P}}
\DeclareMathOperator{\supp}{supp}
\DeclareMathOperator{\rank}{R}
\DeclareMathAccent{\wtilde}{\mathord}{largesymbols}{"65}
\DeclareMathOperator{\asymprank}{\underaccent{\wtilde}{R}}
\DeclareMathOperator{\asympsubrank}{\underaccent{\wtilde}{Q}}
\DeclareMathOperator{\Oh}{\mathcal{O}}
\newcommand{\degenleq}{\unlhd}
\declaretheorem[name=Theorem, parent=section]{theorem}
\declaretheorem[name=Corollary, sibling=theorem]{corollary}
\declaretheorem[name=Lemma, sibling=theorem]{lemma}
\theoremstyle{definition}
\declaretheorem[name=Definition, sibling=theorem]{definition}
\declaretheorem[name=Remark, sibling=theorem]{remark}
\theoremstyle{remark}
\title{Barriers for Rectangular Matrix Multiplication}
\author{Matthias Christandl, François Le Gall, Vladimir Lysikov and Jeroen Zuiddam}
\begin{document}

\maketitle

{\par\noindent
\textbf{Abstract.}
We study the algorithmic problem of multiplying large matrices that are rectangular.
We prove that the method that has been used to construct the fastest algorithms for rectangular matrix multiplication cannot give algorithms with complexity $n^{p + 1}$ for $n \times n$ by $n \times n^p$ matrix multiplication. In fact, we prove a precise numerical barrier for this method. Our barrier improves the previously known barriers, both in the numerical sense, as well as in its generality. 
In particular, we prove that any lower bound on the dual exponent of matrix multiplication $\alpha$ via the big Coppersmith--Winograd tensors cannot exceed 0.6218.
}

\tableofcontents

\section{Introduction}

Given two large matrices, how many scalar arithmetic operations (addition, subtraction and multiplication) are required to compute their matrix product?

The standard algorithm for multiplying two square matrices of shape $n\times n$ costs roughly~$2n^3$ arithmetic operations. On the other hand, we know that at least~$n^2$ operations are required. Denoting by $\omega$ the optimal exponent of $n$ in the number of operations required by any arithmetic algorithm, we thus have $2 \leq \omega \leq 3$. What is the value of $\omega$? Since Strassen published his matrix multiplication algorithm in 1969 we know that $\omega \leq 2.81$~\cite{strassen-gaussian-1969}. Over the years, more constructions of faster matrix multiplication algorithms, relying on insights involving direct sum algorithms, approximative algorithms and asymptotic induced matchings, led to the current upper bound $\omega \leq 2.371339$~\cite{DBLP:journals/jsc/CoppersmithW90, stothers2010complexity, MR2961552, le2014powers, DBLP:conf/soda/AlmanW21, DBLP:conf/soda/GallU18, DWZ23, DBLP:conf/soda/WilliamsXXZ24, alman2024asymmetryyieldsfastermatrix}.%

In applications, the matrices to be multiplied are often very rectangular instead of square; see the examples in~\cite{DBLP:conf/soda/GallU18} and below.
For any nonnegative real $p$, given an~$n \times \lceil n^p\rceil$ matrix and an $\lceil n^p \rceil \times n$ matrix, how many arithmetic operations are required to compute their product? Denoting, similarly as in the square case, by $\omega(p)$ the optimal exponent of $n$ in the number of operations required by any arithmetic algorithm\footnote{Formally speaking, $\omega(p)$ is the infimum over all real numbers $b$ so that the product of any~$n \times \lceil n^p\rceil$ matrix and any $\lceil n^p \rceil \times n$ matrix can be computed in $\Oh(n^{b})$ arithmetic operations.
$\omega$ is defined analogously with square matrix multiplication, so $\omega = \omega(1)$}, we a priori have the bounds $\max(2, 1+p) \leq \omega(p) \leq 2+p$. 
What is the value of~$\omega(p)$?
Parallel to the developments in upper bounding~$\omega$, the upper bound $2+p$ was improved drastically over the years for several regimes of~$p$ \cite{Huang+98,Ke+08,le2012faster,DBLP:conf/soda/GallU18,DBLP:conf/soda/WilliamsXXZ24, LG24,  alman2024asymmetryyieldsfastermatrix}. The best lower bound on~$\omega(p)$, however, has remained $\max(2, 1+p)$.

So the matrix multiplication exponent $\omega$ characterises the complexity of square matrix multiplication
and, for every nonnegative real $p$, the rectangular matrix multiplication exponent~$\omega(p)$ characterises the complexity of rectangular matrix multiplication.
Coppersmith~\cite{DBLP:journals/siamcomp/Coppersmith82} proved that there exists
a value $0 < p < 1$ such that~$\omega(p) = 2$.
The largest $p$ such that $\omega(p) = 2$ is denoted by $\alpha$. We will refer to $\alpha$ as the \emph{dual matrix multiplication exponent}.
The algorithms constructed in~\cite{DBLP:conf/soda/WilliamsXXZ24} give the currently best bound $\alpha > 0.321334$. If~$\alpha = 1$, then of course $\omega = 2$.
In fact, $\omega + \tfrac{\omega}{2}\alpha \leq 3$ (\autoref{alphaomega}).
Thus we study $\omega(p)$ not only to understand rectangular matrix multiplication, but also as a means to prove $\omega = 2$. The value of $\alpha$ appears explicitly in various applications, for example in the recent work on solving linear programs~\cite{10.1145/3313276.3316303, DBLP:conf/soda/Brand20} and empirical risk minimization~\cite{lee2019solving}.

The goal of this paper is to understand why current techniques have not closed the gap between the best lower and upper bound on $\omega(p)$, and to thus understand where to find faster rectangular matrix multiplication algorithms.
We prove a barrier for current techniques to give much better upper bounds than the current ones. %
Our work gives a very precise picture of the limitations of current techniques used to obtain the best upper bounds on $\omega(p)$ and the best lower bounds on $\alpha$.

Our ideas apply as well to $n\times \lceil n^p\rceil$ by $\lceil n^p \rceil \times \lceil n^q \rceil$ matrix multiplication for different $p$ and $q$. We focus on $p=q$ for simplicity.

\subsection{How are matrix multiplication algorithms constructed?}\label{how}
To understand what are the current techniques that we prove barriers for, we explain how the current fastest algorithms for matrix multiplication are constructed, on a high level.
An algorithm for matrix multiplication should be thought of as a reduction of the ``matrix multiplication problem'' to the natural ``unit problem'' that corresponds to multiplying numbers,
\[
\textnormal{matrix multiplication problem} \leq \textnormal{unit problem}.
\]
Mathematically, problems correspond to families of tensors.
Several different notions of reduction are used in this context. We will discuss tensors and reductions in more detail later.

Historically, the asymptotically fast matrix multiplication algorithms for square or rectangular matrices, are obtained by a reduction of the matrix multiplication problem to some intermediate problem and a reduction of the intermediate problem to the unit problem,
\[
\textnormal{matrix multiplication problem} \leq \textnormal{intermediate problem} \leq  \textnormal{unit problem}.
\]
The intermediate problems that have been used so far to obtain the best upper bounds on $\omega(p)$ correspond to the so-called small and big Coppersmith--Winograd tensors $\cw_q$ and $\CW_q$.

Depending on the intermediate problem and the notion of reduction, we prove a barrier on the best upper bound on $\omega(p)$ that can be obtained in the above way. Before we say something about our new barrier, we discuss the history of barriers for matrix multiplication.

\subsection{History of matrix multiplication barriers}

We call a lower bound for all upper bounds on $\omega$ or $\omega(p)$ that can be obtained by some method, a \emph{barrier} for that method. We give a high-level historical account of barriers for square and rectangular matrix multiplication.

Ambainis, Filmus and Le~Gall~\cite{MR3388238} were the first to prove a barrier in the context of matrix multiplication.
They proved that a variety of methods applied to the Coppersmith--Winograd intermediate tensors (which gave the current best upper bounds on $\omega$) cannot give $\omega = 2$ and in fact cannot give $\omega \leq 2.3$.

Alman and Vassilevska Williams~\cite{alman_et_al:LIPIcs:2018:8360, 8555139} proved barriers for a notion of reduction called monomial degeneration, extending the realm of barriers beyond the scope of the paper of Ambainis \emph{et al}. %
They prove that some collections of intermediate tensors, including the Coppersmith--Winograd
intermediate tensors, cannot be used to prove $\omega = 2$.
Their analysis is based on studying the so-called asymptotic independence number of the intermediate problem (also called monomial asymptotic subrank).
Their paper also for the first time studies barriers for \emph{rectangular} matrix multiplication, for $0\leq p\leq 1$ and monomial degeneration. For example, they prove that the intermediate tensor~$\CW_6$ can only give~$\alpha \leq 0.872$ \cite[Cor.~6.1 for $p=8$]{alman_et_al:LIPIcs:2018:8360}. 

Blasiak \emph{et al.}~\cite{MR3631613, blasiak2017groups} studied barriers for square matrix multiplication algorithms obtained with a subset of the group-theoretic method, which is a monomial degeneration applied to certain group algebra tensors.

Christandl, Vrana and Zuiddam~\cite{DBLP:conf/coco/ChristandlVZ19}
proved barriers that apply more generally than the previous one, namely for a type of reduction called degeneration.
Their barrier is given in terms of the irreversibility of the intermediate tensor.
Intuitively, irreversibility can be thought of as an asymptotic measure of the failure of Gaussian elimination to bring tensors into diagonal form.
To compute irreversibility, they used the asymptotic spectrum of tensors and in particular two families of real tensor parameters with special algebraic properties: the quantum functionals~\cite{DBLP:conf/stoc/ChristandlVZ18} and support functionals~\cite{strassen1991degeneration}, although one can equivalently use asymptotic slice rank to compute the barriers for the Coppersmith--Winograd intermediate tensors.
Alman~\cite{DBLP:conf/coco/Alman19} simultaneously and independently obtained the same barrier, relying on a study of asymptotic slice rank.

\subsection{New barriers for rectangular matrix multiplication}

We prove new barriers for rectangular matrix multiplication using a class of tensor parameters called adequate tensor parameters. These include the quantum functionals and support functionals.

We first set up a general barrier framework that encompasses all previously used notions of reductions and then numerically compute barriers for the degeneration notion of reduction and the Coppersmith--Winograd intermediate problems. We also discuss barriers for ``mixed'' intermediate problems, which covers a method used by, for example, Coppersmith~\cite{DBLP:journals/jc/Coppersmith97}.

We will explain our barrier in more detail in the language of tensors, but first we will give a numerical illustration of the barriers.

\subsubsection{Numerical illustration of the barriers}\label{subsub:fig}

For the popular intermediate tensor $\CW_6$ our barrier to get upper bounds on~$\omega(p)$ (for various $p$) via degeneration looks as follows. In \autoref{fig1}, the horizontal axis goes over all~$p \in [0,2]$. The blue line is the upper bound on $\omega(p)$ obtained via $\CW_6$ as in~\cite{le2012faster}.\footnote{Better upper bounds have been obtained in \cite{DBLP:conf/soda/WilliamsXXZ24}; for these high-level comparisons they do not change the general picture.} The yellow line is our barrier. The red line is the best lower bound $\max\{2, 1+p\}$ on~$\omega(p)$. (We note that, in~\cite{le2012faster}, the best upper bounds on $\omega(p)$ are obtained using $\CW_q$ with~$q=5$ for $p\leq0.81$, $q=6$ for $0.81<p\leq3.5$ and $q=7$ for $p>3.5$.)

\begin{figure}[H]
\centering
\begin{tikzpicture}
    \node[anchor=south west,inner sep=0] at (0,0) {\includegraphics[scale=0.8]{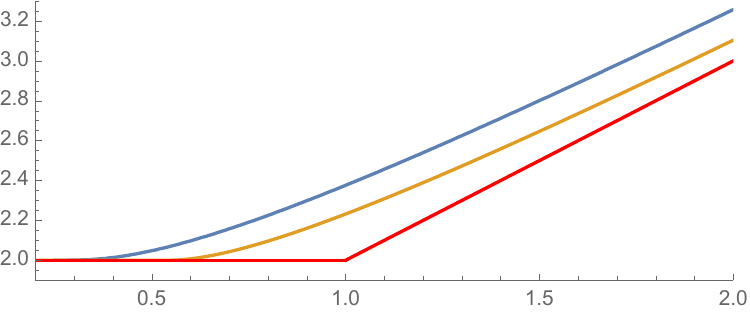}};
    \draw[color=gray,-latex] (0.48,0.5)--(10.5,0.5) node[below,color=black] {$p$};
    \draw[color=gray,-latex] (0.48,0.5)--(0.48,4.7) node[left,color=black] {$\omega$};
    \draw[fill,color={rgb,256:red,91;green,124;blue,175}] (10.6,4.4)--(10.8,4.4)--(10.8,4.6)--(10.6,4.6)--cycle;
    \node[anchor=west,scale=0.7] at (11,4.5) {upper bound~\cite{le2012faster}};
    \draw[fill,color={rgb,256:red,225;green,156;blue,36}] (10.6,3.9)--(10.8,3.9)--(10.8,4.1)--(10.6,4.1)--cycle;
    \node[anchor=west,scale=0.7] at (11,4.0) {our barrier};
    \draw[fill,color=red] (10.6,3.4)--(10.8,3.4)--(10.8,3.6)--(10.6,3.6)--cycle;
    \node[anchor=west,scale=0.7] at (11,3.5) {$\max \{2, 1 + p\}$};
\end{tikzpicture}
\caption{The blue line is the upper bound on $\omega(p)$ obtained via $\CW_6$ as in \cite{le2012faster} where $p \in [0,2]$ in on the horizontal axis. The yellow line is our barrier for upper bounds on~$\omega(p)$ via degeneration and the intermediate tensor $\CW_6$. The red line is the lower bound on~$\omega(p)$.}
\label{fig1}
\end{figure}

In~\autoref{fig2} we give the barrier values for $\CW_q$ for $q\in \{2, \ldots, 8\}$, in terms of the dual matrix multiplication exponent~$\alpha$. (We recall that $\alpha$ is the largest value of $p$ such that~$\omega(p) = 2$.) For $q=6$, this barrier value equals the smallest value of $p$ in \autoref{fig1} where the yellow line goes  above $2$.

\begin{figure}[H]
\centering
\begin{tikzpicture}
    \node[anchor=south west,inner sep=0] at (0,0) {\includegraphics[scale=0.8]{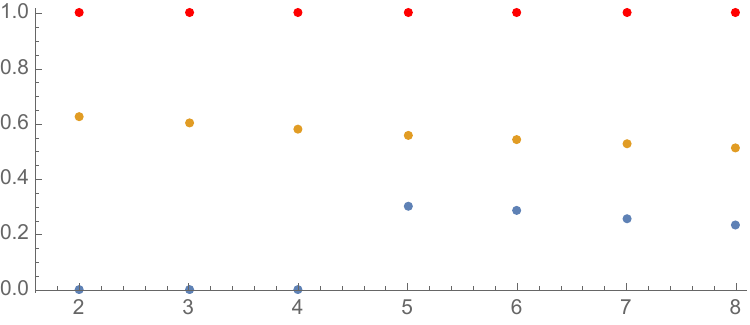}};
    \draw[color=gray,-latex] (0.48,0.48)--(10.5,0.48) node[below,color=black] {$q$};
    \draw[color=gray,-latex] (0.48,0.48)--(0.48,4.8) node[left,color=black] {$\alpha$};
    \draw[fill,color={rgb,256:red,91;green,124;blue,175}] (10.6,4.4)--(10.8,4.4)--(10.8,4.6)--(10.6,4.6)--cycle;
    \node[anchor=west,scale=0.7] at (11,4.5) {lower bound \cite{le2012faster}};
    \draw[fill,color={rgb,256:red,225;green,156;blue,36}] (10.6,3.9)--(10.8,3.9)--(10.8,4.1)--(10.6,4.1)--cycle;
    \node[anchor=west,scale=0.7] at (11,4.0) {our barrier};
    \draw[fill,color=red] (10.6,3.4)--(10.8,3.4)--(10.8,3.6)--(10.6,3.6)--cycle;
    \node[anchor=west,scale=0.7] at (11,3.5) {upper bound ($=\!1$)};
\end{tikzpicture}
\caption{The blue points are the lower bounds on $\alpha$ obtained via $\CW_q$ as in~\cite{le2012faster} for all $q \in \{2, \ldots, 8\}$. The yellow points are our barriers for the best lower bound on $\alpha$ obtainable via degeneration and the intermediate tensor $\CW_q$. The red points are the best upper bounds on $\alpha$, namely $1$. The lower bound $\alpha>0.3029$ in \cite{le2012faster} is attained using $q=5$. Any lower bound on $\alpha$ using degeneration and $\CW_q$ for any $q$, cannot exceed $0.6218$, the highest yellow point in the graph.}
\label{fig2}
\end{figure}

Our results give that the best lower bound on $\alpha$ obtainable with degenerations via $\CW_q$ for any $q$, cannot exceed $0.6218$. (This value corresponds to the highest yellow point in \autoref{fig2}. See also \autoref{subsec:values}.)  Recall that the currently best lower bound is~$\alpha > 0.321334$~\cite{DBLP:conf/soda/WilliamsXXZ24}.

Compared to \cite{alman_et_al:LIPIcs:2018:8360} our barriers are more general, numerically higher and apply not only for $0\leq p\leq 1$ but also for $p \geq 1$. For example, \cite{alman_et_al:LIPIcs:2018:8360} proves that monomial degeneration via $\CW_6$ can only give~$0.871 \leq \alpha$ whereas we get that the stronger degenerations via $\CW_6$ can only give~$0.543 \leq \alpha$.

\subsubsection{The barrier in tensor language}

Let us continue the discussion that we started in \autoref{how} of how algorithms are constructed, but now in the language of tensors. The goal is to explain our barrier in more detail.

As we mentioned, algorithms correspond to reductions from the matrix multiplication problem to some natural unit problem and the problems correspond to tensors.
Let $\FF$ be some fixed base field. (The value of $\omega(p)$ may in fact depend on the characteristic of the base field.)
A tensor is a trilinear map $\FF^{n_1} \times \FF^{n_2} \times \FF^{n_3} \to \FF$. The problem of multiplying an~$\ell \times m$ matrix and an $m \times n$ matrix corresponds to the matrix multiplication tensor
\[
\langle \ell, m, n\rangle = \sum_{i = 1}^\ell \sum_{j = 1}^m \sum_{k = 1}^n x_{ij} y_{jk} z_{ki}.
\]
The unit problem corresponds to the family of diagonal tensors
\[
\langle n\rangle = \sum_{i = 1}^n  x_i y_i z_i.
\]
There are several notions of reduction that one can consider, but the following is the most natural one. For two tensors $S$ and $T$ we say $S$ is a restriction of $T$ and write~$S \leq T$ if there are three linear maps $A,B,C$ of appropriate formats such that~$S$ is obtained from~$T$ by precomposing with $A$, $B$ and $C$, that is, $S = T \circ (A,B,C)$.

A very important observation (see, e.g., \cite{burgisser1997algebraic} or \cite{blaser2013fast}) is that any matrix multiplication algorithm corresponds to a tensor restriction
\[
\langle \ell, m, n\rangle \leq \langle r\rangle.
\]
Square matrix multiplication algorithms look like
\[
\langle n, n, n\rangle \leq \langle r\rangle
\]
and rectangular matrix multiplication, of the form that we study, look like
\[
\langle n, n, \lceil n^p\rceil\rangle \leq \langle r\rangle.
\]
In general, faster algorithms correspond to having smaller $r$ on the right-hand side. In fact, if
\[
\langle n, n, n\rangle \leq \langle n^{c + o(1)}\rangle
\]
then $\omega \leq c$, and similarly for any $p \geq 0$, if
\[
\langle n, n, \lceil n^p\rceil\rangle \leq \langle n^{c + o(1)}\rangle
\]
then $\omega(p) \leq c$. For example, if
\[
\langle n, n, n^3\rangle \leq \langle n^{c + o(1)}\rangle
\]
then $\omega(3) \leq c$.

Next we utilise a natural product structure on matrix multiplication tensors which is well known as the fact that block matrices can be multiplied block-wise. For tensors $S$ and $T$ one naturally defines a Kronecker product $S \otimes T$ generalizing the matrix Kronecker product. Then the matrix multiplication tensors multiply like $\langle n_1, n_2, n_3\rangle \otimes \langle m_1, m_2, m_3\rangle = \langle n_1m_1, n_2m_2, n_3m_3\rangle$ and the diagonal tensors multiply like~$\langle n\rangle \otimes \langle m\rangle = \langle nm\rangle$.

We can thus say: if
\[
\langle 2, 2, 2^3\rangle^{\otimes n} \leq \langle 2\rangle^{\otimes c n + o(n)}
\]
then $\omega(3) \leq c$. We now think of our problem as the problem of determining the optimal asymptotic rate of transformation from $\langle 2\rangle$ to $\langle 2,2,2^3\rangle$. Of course we can do similarly for values of $p$ other than $p=3$, if we deal carefully with $p$ that are non-integer. For clarity we will in this section stick to $p=3$.

In practice, as mentioned before, algorithms are obtained by reductions via intermediate problems. This works as follows. Let $T$ be any tensor, the intermediate tensor. Then clearly, if
\begin{equation}\label{eq:method}
\langle 2, 2, 2^3\rangle^{\otimes n} \leq T^{\otimes a n + o(n)} \leq  \langle 2\rangle^{\otimes ab n + o(n)},
\end{equation}
then $\omega(3) \leq a b$. The barrier we prove is a lower bound on $a b$ depending on $T$ and the notion of reduction used in the inequality $\langle 2, 2, 2^3\rangle^{\otimes n} \leq T^{\otimes a n + o(n)}$, which in this section we take to be restriction.

We obtain the barrier as follows. Suppose that $F$ is a map from the set of tensors to the nonnegative real numbers that is $\leq$-monotone, $\otimes$-multiplicative and $\langle n\rangle$-normalised, meaning that for any tensors $S$ and $T$ the following holds: if $S \leq T$ then~$F(S) \leq F(T)$; $F(S\otimes T) = F(S) F(T)$ and $F(\langle n\rangle) = n$.  (These conditions on $F$ can be slightly weakened, which we will do in a moment.)
We apply $F$ to both sides of the first inequality in \eqref{eq:method} to get
\[
F(\langle 2,2,2^3\rangle) \leq F(T)^a
\]
and so
\[
\frac{\log F(\langle 2,2,2^3\rangle)}{\log F(T)} \leq a
\]
Let $G$ be another map from tensors to reals that is $\leq$-monotone, $\otimes$-multiplicative and $\langle n\rangle$-normalised. We apply $G$ to both sides of the second inequality in \eqref{eq:method} to get
\[
G(T) \leq 2^b
\]
and so
\[
\log G(T) \leq b.
\]
We conclude that
\[
\frac{\log F(\langle 2,2,2^3\rangle)}{\log F(T)} \log G(T) \leq a b.
\]
Our barrier is thus
\[
\max_{F, G}\frac{\log F(\langle 2,2,2^3\rangle)}{\log F(T)} \log G(T) \leq a b.
\]
where the maximisation is over the $\leq$-monotone, $\otimes$-multiplicative and $\langle n\rangle$-normalised maps from tensors to reals. 

Let us now discuss suitable choices for the maps $F$ and $G$. Since $\max_G G(T)$ equals the asymptotic rank $\asymprank(T) = \lim_{n\to\infty} \rank(T^{\otimes n})^{1/n}$ by asymptotic spectrum duality \cite{Strassen:88-asymptotic}, we may write the barrier as
\[
\max_{F}\frac{\log F(\langle 2,2,2^3\rangle)}{\log F(T)} \log \asymprank(T) \leq a b.
\]
The asymptotic rank $\asymprank(T)$ we generally do not know how to compute. The best lower bounds we have are the flattening ranks, which are simply the matrix rank of the matrix obtained by grouping together two of the three tensor legs of $T$ (in one of three possible ways).

Regarding the choice of maps $F$, 
for tensors over the complex numbers, we know a family of $\leq$-monotone, $\otimes$-multiplicative and $\langle n\rangle$-normalised maps from tensors to reals, called the quantum functionals~\cite{DBLP:conf/stoc/ChristandlVZ18}.
To make our results more general, we will carry out the above reasoning to obtain the barrier using a larger class of maps that we call \emph{adequate maps} (which we will discuss later, \autoref{def:adequate}).
For tensors over the complex numbers, the quantum functionals are adequate.\footnote{Generally, all elements in the asymptotic spectrum of tensors \cite{Strassen:88-asymptotic} are adequate maps.}
For tensors over any field, a family of adequate maps is known, called the support functionals~\cite{strassen1991degeneration}. Our main barrier result then reads as follows:

\begin{theorem}\label{th:intro:main}
Upper bounds on $\omega(p)$ obtained via the intermediate tensor $T$ are at least
\[
\max_{F}\frac{\log(F(\langle 2,1,1\rangle)F(\langle 1,2,1\rangle)F(\langle 1,1,2\rangle)^p)}{\log F(T)} \log \asymprank(T),
\]
where the maximisation is over all adequate maps. %
\end{theorem}

See \autoref{thm:approx} for the precise statement of the result and \autoref{subsub:fig} for illustrations.

For the dual exponent $\alpha$ we prove the following barrier (precise statement in \autoref{th:barrier-alpha}).

\begin{theorem}\label{th:barrier-alpha-intro}
    For any $0 < p < 1$, is $T$ is used as an intermediate tensor to prove that $p$ is a lower bound on $\alpha$, then
    \[
	p \leq \min_F \frac{2 \log F(T)}{\log \asymprank(T) \log F(\left<1,1,2\right>)} - \frac{\log F(\left<2,2,1\right>)}{\log F(\left<1,1,2\right>)},
    \]
    where the minimization is over all adequate maps $F$ such that $\log F(\left<1,1,2\right>) \neq 0$.
\end{theorem}

In \autoref{sec:numerical} we will use the support functionals to obtain concrete numerical barriers for specific $T$ using \autoref{th:intro:main} and \autoref{th:barrier-alpha-intro}.

\begin{remark}
In \cite{DBLP:conf/coco/ChristandlVZ19} it was shown that any upper bound on the square matrix multiplication exponent $\omega = \omega(1)$ obtained via the intermediate tensor $T$ is at least
\[
2 \frac{\log \asymprank(T)}{\log \asympsubrank(T)}.
\]
This barrier can easily be recovered from \autoref{th:intro:main}. Indeed, when we set $p = 1$, the barrier in \autoref{th:intro:main} simplifies as follows. For any adequate $F$ (\autoref{def:adequate}) one can show that $F(\langle 2,1,1\rangle)F(\langle 1,2,1\rangle)F(\langle 1,1,2\rangle) = F(\langle 2,2,2\rangle)$ and $F(\langle 2,2,2\rangle) \geq \asympsubrank(\langle 2,2,2\rangle) = 4$, where~$\asympsubrank$ denotes the asymptotic subrank. Moreover, $\min_F F(T) = \asympsubrank(T)$ by asymptotic spectrum duality~\cite{Strassen:88-asymptotic}. Thus
\[
\max_{F}\frac{\log(F(\langle 2,1,1\rangle)F(\langle 1,2,1\rangle)F(\langle 1,1,2\rangle))}{\log F(T)} \log \asymprank(T) 
\geq \max_{F}\frac{2}{\log F(T)} \log \asymprank(T) = 2\frac{\log \asymprank(T)}{\log \asympsubrank(T)},
\]
which gives the claim.
There are several elements that make proving  \autoref{th:intro:main} more involved than the simpler barrier for the square matrix multiplication exponent of \cite{DBLP:conf/coco/ChristandlVZ19,DBLP:conf/coco/Alman19}. First of all, the barrier in \autoref{th:intro:main} makes more subtle use of the (adequate) maps $F$, namely their asymmetric nature and the fact that inside the maximization they appear in a numerator and denominator, leading to a much more interesting optimization problem. Moreover, contrary to our earlier running example in which we looked at the matrix multiplication tensor $\langle 2,2,2^3\rangle$, proving  \autoref{th:intro:main} involves considering the ``tensor'' $\langle 2,2,2^p\rangle$ for any real $p\geq0$. For this to make sense, we introduce (using adequate maps and limits) a new notion of a ``virtual matrix multiplication tensor'' (\autoref{subsec:virtual}), which will play a crucial role in our proofs. %
\end{remark}

\subsubsection{Catalyticity in matrix multiplication algorithms} We discussed that, in practice, the best upper bound on, say, $\omega(3)$ is obtained by a chain of inequalities of the form
\begin{equation}\label{eq:2}
\langle 2, 2, 2^3\rangle^{\otimes n} \leq T^{\otimes a n + o(n)} \leq  \langle 2\rangle^{\otimes ab n + o(n)}.
\end{equation}
We utilised this structure to obtain the barrier.
A closer look reveals that the methods used in practice have even more structure. Namely, they give an inequality that also has diagonal tensors on the left-hand side:
\begin{equation}\label{eq:3}
\langle2\rangle^{\otimes c n} \otimes \langle 2, 2, 2^3\rangle^{\otimes n} \leq T^{\otimes a n + o(n)} \leq  \langle 2\rangle^{\otimes ab n + o(n)}.
\end{equation}
The reason we say \eqref{eq:3} has more structure than \eqref{eq:2} is that \eqref{eq:3} implies a restriction of the form \eqref{eq:2} via recursive application.

Part of the tensor $\langle 2\rangle^{\otimes ab n + o(n)}$ on the far right-hand side acts as a catalyst since~$\langle2\rangle^{\otimes c n}$ is returned on the far left-hand side. We obtain better barriers when we have a handle on the amount of catalyticity $c$ that is used in the method~(see the schematic \autoref{fig3}), again by applying maps $F$ and $G$ to both sides of the two inequalities and deducing a lower bound on $ab$. The precise statement appears in \autoref{thm:approx}.

\begin{figure}[H]
\centering
\begin{tikzpicture}
    \node[anchor=south west,inner sep=0] at (0,0) {\includegraphics[scale=0.8]{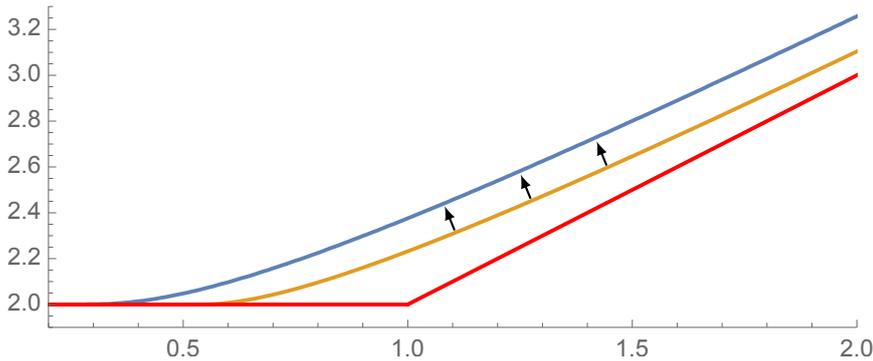}};
    \draw[color=gray,-latex] (0.48,0.5)--(10.5,0.5) node[below,color=black] {$p$};
    \draw[color=gray,-latex] (0.48,0.5)--(0.48,4.7) node[left,color=black] {$\omega$};
    \draw[fill,color={rgb,256:red,91;green,124;blue,175}] (10.6,4.4)--(10.8,4.4)--(10.8,4.6)--(10.6,4.6)--cycle;
    \node[anchor=west,scale=0.7] at (11,4.5) {upper bound~\cite{le2012faster}};
    \draw[fill,color={rgb,256:red,225;green,156;blue,36}] (10.6,3.9)--(10.8,3.9)--(10.8,4.1)--(10.6,4.1)--cycle;
    \node[anchor=west,scale=0.7] at (11,4.0) {our barrier};
    \draw[fill,color=red] (10.6,3.4)--(10.8,3.4)--(10.8,3.6)--(10.6,3.6)--cycle;
    \node[anchor=west,scale=0.7] at (11,3.5) {$\max \{2, 1 + p\}$};
    \draw[black,thick,-latex] (5.8,1.85) -> (5.8,2.15);
    \draw[black,thick,-latex] (6.8,2.28) -> (6.8,2.60);
    \draw[black,thick,-latex] (7.8,2.73) -> (7.8,3.03);
\end{tikzpicture}
\caption{This is the graph from \autoref{fig1} with arrows that indicate the influence of catalyticity. Roughly speaking, the barrier for $\CW_6$ (the yellow line) moves upwards when more catalyticity is used.}
\label{fig3}
\end{figure}

\subsection{Overview of the next sections}

In \autoref{sec:algs} we discuss in more detail the methods that are used to construct rectangular matrix multiplication algorithms and the different notions of reduction.

In \autoref{sec:barriers} we introduce and prove our barriers in the form of a general framework, dealing formally with non-integer $p$. We also discuss how to analyse ``mixed'' intermediate tensors.

In \autoref{sec:numerical} we discuss how to compute the barriers explicitly using the support functionals and we compute them for the Coppersmith--Winograd tensors $\CW_q$.

\section{Rectangular matrix multiplication algorithms}\label{sec:algs}

At the core of the methods that give the best upper bounds on the rectangular matrix multiplication exponent $\omega(p)$ lies the following theorem,
which can be proven using the asymptotic sum inequality for
rectangular matrix multiplication~\cite{DBLP:journals/tcs/LottiR83}
and the monotonicity of $\omega(p)$.

Denote by $\oplus$ the naturally defined direct sum for tensors. The rank $\rank(T)$ of a tensor $T$ is the smallest number $n$ such that $T \leq \langle n\rangle$, or equivalently, the smallest number $n$ such that $T(x,y,z) = \sum_{i=1}^n u_i(x)v_i(y)w_i(z)$ where $u_i,v_i,w_i$ are linear. The asymptotic rank $\asymprank(T)$ is defined as the limit $\lim_{n\to\infty} \rank(T^{\otimes n})^{1/n}$, which equals the infimum $\inf_n \rank(T^{\otimes n})^{1/n}$ since tensor rank is submultiplicative under $\otimes$ and bounded (using Fekete's lemma).

\begin{theorem}\label{thm:tau}
Let $m \geq n^p$.
If $\asymprank(\left< n, n, m \right>^{\oplus s}) \leq r$, then $s\, n^{\omega(p)} \leq r$.
\end{theorem}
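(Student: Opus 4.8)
The plan is to obtain this essentially for free from the rectangular asymptotic sum inequality of Lotti and Romani~\cite{DBLP:journals/tcs/LottiR83}, with monotonicity of $\omega(\cdot)$ doing the little extra work needed to pass from the exact regime $m = n^{p}$ to the stated hypothesis $m \geq n^{p}$.

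First I would recall the input. Note that $\omega(\cdot)$ has the asymptotic-rank characterisation $\asymprank(\langle a, a, b\rangle) = a^{\omega(\log_a b)}$ for all positive integers $a, b$, which is just a reformulation of the definition of $\omega(\cdot)$, using that matrix multiplication tensors are invariant under cyclic permutation of the three tensor factors (so the shape $\langle a, a, b\rangle$ here and the shape $\langle a, b, a\rangle$ appearing in the definition of $\omega(\log_a b)$ agree up to isomorphism). The rectangular asymptotic sum inequality then states, in the special case of a direct sum of identical tensors, that if $\asymprank(\langle a, a, b\rangle^{\oplus s}) \leq r$ then $s \cdot a^{\omega(\log_a b)} \leq r$. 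This is the rectangular analogue of Sch\"onhage's $\tau$-theorem: one passes to a high tensor power, where the direct sum becomes $\langle a^{N}, a^{N}, b^{N}\rangle^{\oplus s^{N}}$, and combines the summands into a single matrix multiplication tensor while keeping track of its aspect ratio, so that it is the rectangular rather than the square exponent that governs the final estimate.

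The deduction is then two lines. Given $m \geq n^{p}$, put $q := \log_n m \geq p$, so that $\langle n, n, m\rangle = \langle n, n, n^{q}\rangle$; the hypothesis $\asymprank(\langle n, n, m\rangle^{\oplus s}) \leq r$ together with the inequality above gives $s \cdot n^{\omega(q)} \leq r$. Since $\langle n, n, \lceil n^{p}\rceil\rangle$ is a restriction of $\langle n, n, \lceil n^{q}\rceil\rangle$ whenever $p \leq q$, we have $\omega(p) \leq \omega(q)$, and hence $s \cdot n^{\omega(p)} \leq s \cdot n^{\omega(q)} \leq r$, as claimed.

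The difficulty is not in this deduction, which is routine, but in the cited sum inequality, so in a fully self-contained treatment that is where I would expect to spend effort: one has to carry through the Sch\"onhage combining argument so that the rectangular shape is preserved, since a naive full symmetrisation of the three tensor factors collapses it to the square case and only yields an $\omega$-bound; and one has to absorb the fact that $q = \log_n m$ need not be rational, which is handled by approximating it by rationals and using that $\omega(\cdot)$ is continuous, and in fact convex, on $[0, \infty)$. A minor additional point is that such sum inequalities are cleanest to state for border rank; the asymptotic-rank version follows by applying the border-rank version to the powers $\langle n^{k}, n^{k}, m^{k}\rangle^{\oplus s^{k}}$ and letting $k \to \infty$.
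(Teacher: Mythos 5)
Your proposal is correct and follows exactly the route the paper indicates: the paper gives no proof of \cref{thm:tau} beyond citing the rectangular asymptotic sum inequality of Lotti--Romani together with monotonicity of $\omega(\cdot)$, and your two-line deduction (apply the sum inequality to get $s\,n^{\omega(\log_n m)} \leq r$, then use $\omega(p) \leq \omega(\log_n m)$) is precisely that argument. The additional remarks on where the real work lies (the combining argument in the cited inequality, continuity of $\omega$, and the border-rank/asymptotic-rank passage) are accurate but concern the cited result rather than the theorem itself.
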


Equivalently, phrased in the language of the introduction, \autoref{thm:tau} says that for any $m,n \in \NN$ such that $m \geq n^p$, if $\langle s\rangle^{\otimes k} \otimes \langle n,n,m \rangle^{\otimes k} \leq \langle r\rangle^{\otimes  k + o(k)}$
then $s n^{\omega(p)} < r$.  In practice, the upper bound $\asymprank(\left< n, n, m \right>^{\oplus s}) \leq r$ is obtained from a restriction $\langle s\rangle^{\otimes k} \otimes \left< n, n, m \right>^{\otimes k} \leq T^{\otimes a k + o(k)}$ for some intermediate tensor $T$  and an upper bound on the asymptotic rank~$\asymprank(T)$. In \autoref{sec:barriers} we will prove barriers for algorithms obtained in this way.

\subsubsection*{Reductions}
The restriction in the aforementioned inequality $\langle s\rangle^{\otimes k} \otimes \left< n, n, m \right>^{\otimes k} \leq T^{\otimes a k + o(k)}$ may be (and has been, in the literature) replaced by other types of reductions, which we will now discuss. (See also \cite{Strassen:87-relative, burgisser1997algebraic,blaser2013fast}.)

\emph{Degeneration} is a very general reduction that extends restriction.
Let $T : \FF^{n_1} \times \FF^{n_2} \times \FF^{n_3} \to \FF$ and $S : \FF^{m_1} \times \FF^{m_2} \times \FF^{m_3} \to \FF$ be trilinear maps.
We say $S$ is a degeneration of $T$ and write $S \degenleq T$ if $S = \lim_{\eps\to 0} T(A(\eps)x, B(\eps)y, C(\eps)z)$
for some matrices $A(\eps), B(\eps), C(\eps)$ with coefficients that are Laurent polynomials in $\eps$. %
Restriction $\leq$ defined above is the special case of degeneration where the matrices $A, B, C$ do not depend on $\eps$. %

There are also more restrictive notions of reductions which are easy to analyze combinatorially.
We say that $S$ is a \emph{monomial restriction} of $T$ and write $S \leq_M T$ if~$S = T(Ax, By, Cz)$ where the matrices of $A$, $B$ and $C$ have at most one nonzero entry in each row and column.
Essentially, $S$ is obtained from $T$ by rescaling some variables of the trilinear form and setting some of the variables to zero.

Similarly, we say that $S$ is a \emph{monomial degeneration} of $T$ and write $S \degenleq_M T$ if~$S = \lim_{\eps\to 0} T(A(\eps)x, B(\eps)y, C(\eps)z)$ where matrices $A(\eps), B(\eps), C(\eps)$ contain in each row and each column only one nonzero element.
Without loss of generality, the nonzero elements can be assumed to be monomials in~$\eps$.
Strassen's application of the laser method uses monomial degenerations. Coppersmith and Winograd~\cite{DBLP:journals/jsc/CoppersmithW90} uses monomial restrictions
where the variables zeroed out are chosen using a certain combinatorial gadget
(a Salem--Spencer set).
Later work building on the Coppersmith--Winograd construction retain this structure.

\subsubsection*{Coppersmith--Winograd intermediate tensors}
All improvements on the rectangular matrix multiplication exponent $\omega(p)$ since Coppersmith and Winograd \cite{DBLP:journals/jsc/CoppersmithW90} use the so-called Coppersmith--Winograd tensors as intermediate tensors, which are defined as
\[
\CW_q(x, y, z) = x_0 y_0 z_{q + 1} + x_0 y_{q + 1} z_0 + x_{q + 1} y_0 z_0 + \sum_{i = 1}^q (x_0 y_i z_i + x_i y_0 z_i + x_0 y_i z_i)
\]
It is known, because of a border rank decomposition, that
$\asymprank(\CW_q) = q + 2$. The barriers that we discuss in \autoref{sec:barriers} we will numerically evaluate for the $\CW_q$ tensors in \autoref{sec:numerical}. We will there, as an extra example, also evaluate barriers for the so-called ``little'' Coppersmith--Winograd tensors, see \autoref{rem:small-cw}.

\subsubsection*{Mixed Coppersmith--Winograd tensors}

Coppersmith~\cite{DBLP:journals/jc/Coppersmith97} used a mixture of
$\CW_q$ tensors with different $q$'s to upper bound~$\omega(p)$. We will analyze this class of methods in~\autoref{sec:mixed}. The best upper bounds in~\cite{le2012faster,DBLP:conf/soda/GallU18} do not use such a mixture of different $q$'s.

\section{Barriers for rectangular matrix multiplication}\label{sec:barriers}

In this section we prove barriers for certain methods to prove upper bounds on the rectangular matrix multiplication exponent. We begin with two preliminary subsections in which we introduce the notion of an ``adequate'' tensor parameter and the notion of a virtual matrix multiplication tensor. These notions will play a crucial role in stating and proving the barriers.

\subsection{Adequate tensor parameters}
We introduce a general class of tensor parameters, called adequate tensor parameters, in terms of which our barriers will be given later. %

Recall that $\leq$ denotes restriction on tensors as defined in the introduction.\footnote{We remark that everything we discuss in this section also holds if restriction is replaced with degeneration,  monomial degeneration or monomial restriction.}

\begin{definition}\label{def:adequate}
Let $F : \{\textnormal{tensors}\} \to \RR_{\geq 0}$ be any map. We call $F$ \emph{adequate} if it satisfies the following properties:
\begin{enumerate}[label=\upshape(\roman*)]
\item $\leq$-monotone: $F(S) \leq F(T)$ whenever $S \leq T$;
\item $\otimes$-submultiplicative: $F(S \otimes T) \leq F(S) \cdot F(T)$;
\item MaMu-$\otimes$-multiplicative:\[ F(\left<\ell_1 \ell_2, m_1 m_2, n_1 n_2 \right>) = F(\left<\ell_1, m_1, n_1\right>)\cdot F(\left<\ell_2, m_2, n_2\right>);\]
\item self-$\oplus$-additive: $F(T^{\oplus s}) = s\cdot F(T)$;
\item bounded by the asymptotic rank $\asymprank$: $F(T) \leq \asymprank(T)$.
\end{enumerate}
\end{definition}

Two known families of adequate tensor parameters %
are the ``upper support functionals'' of Strassen~\cite{strassen1991degeneration} and the ``quantum functionals'' of Christandl, Vrana and Zuiddam~\cite{DBLP:conf/stoc/ChristandlVZ18} (which we will not explicitly use in this paper). 
In \autoref{sec:numerical} we will discuss and use the upper support functionals.
In the rest of this section we will work with the abstract notion of adequate tensor parameters (\autoref{def:adequate}).

\subsection{Virtual matrix multiplication tensors}\label{subsec:virtual}
For any $p \in \NN$ and adequate function $F$, we have a value $F(\langle2,2,2^p\rangle)$.
In this section we will extend $F(\langle2,2,2^p\rangle)$ to a continuous function in $p \in \RR_{\geq0}$. %
We first observe the following.

\begin{lemma}\label{lem:pow}
Suppose that $a$ and $b$ are positive integers and $p = \log_a b$.
If $m \geq n^p$, then for every adequate $F$ we have
\[
F(\left<n, n, m\right>) \geq F(\left<a, a, b\right>)^{\log_a n}.
\]
\end{lemma}
\begin{proof}
  For every rational number $\frac{s}{t} < \log_a n$ we have
  \begin{multline*}
    F(\left<n, n, m\right>) = F(\left<n, n, m\right>^{\otimes t})^{\frac{1}{t}} = F(\left<n^t, n^t, m^t\right>)^{\frac{1}{t}} \geq %
    F(\left<a^s, a^s, b^{s}\right>)^{\frac{1}{t}} =  F(\left<a,a,b\right>)^{\frac{s}{t}}. 
  \end{multline*}
  This proves the claim.
\end{proof}

From \autoref{lem:pow} it follows that $\log_a F(\left<a, a, a^p\right>)$ is
the same for any~$a$ with integer power $a^p$.
We introduce a notation for dealing with this
value without referring to the set of possible values of $a$

\begin{definition}
We introduce a formal symbol $\left<2, 2, 2^p\right>$
for each real $p \geq 0$, which we call a \emph{virtual matrix multiplication tensor}.
We extend adequate maps $F$ to virtual matrix multiplication tensors as follows.
If~$p = \log_a b$ for some positive integers $a$ and $b$, then we define
\[
F(\left<2, 2, 2^p\right>) = 2^{\log_a F(\left<a, a, b\right>)}.
\]
Otherwise, we define
\[
F(\left<2, 2, 2^p\right>) = \inf \{F(\langle2, 2, 2^P\rangle) \mid P \geq p,\, \exists a, b \in \mathbb{Z}_{\geq 0}\colon P = \log_a b\}.
\]
\end{definition}

If $p$ is integer, then the value of $F$ on $\langle 2,2,2^p\rangle$ as a tensor and as a virtual tensor coincide. %
Thus we identify the virtual matrix multiplication tensor
$\left<2, 2, 2^p\right>$ with the matrix multiplication tensor $\left<2, 2, 2^p\right>$ when the latter exists.

Using this notation, \autoref{lem:pow} can be rephrased as follows.
\begin{lemma}
  \label{lem:powers}
  If $m \geq n^p$, then $F(\left<n, n, m\right>) \geq F(\left<2, 2, 2^p\right>)^{\log n}$ for every adequate~$F$.
\end{lemma}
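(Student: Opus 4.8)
The plan is to prove the rephrased version of \cref{lem:pow}, namely that $F(\left<n, n, m\right>) \geq F(\left<2, 2, 2^p\right>)^{\log n}$ whenever $m \geq n^p$, and to observe afterwards that the statement with $p$ an integer follows by specializing $a = 2$. First I would recall from the preceding paragraph that, by \cref{lem:pow}, the quantity $\log_a F(\left<a, a, a^p\right>)$ is independent of the particular integer $a$ with $a^p$ an integer. Indeed, if $a$ and $a'$ both have integer powers $a^p$ and $a'^p$, then applying \cref{lem:pow} with $n = a'$ (and the trivial bound $a'^p \geq a'^p$) gives $F(\left<a', a', a'^p\right>) \geq F(\left<a, a, a^p\right>)^{\log_a a'}$, hence $\log_{a'} F(\left<a', a', a'^p\right>) \geq \log_a F(\left<a, a, a^p\right>)$; by symmetry the two are equal. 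This justifies the definition of $F(\left<2,2,2^p\right>)$ as $2$ raised to this common value when $p = \log_a b$ is a ``logarithmic'' rational.

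Next I would split into two cases according to whether $p$ is of the form $\log_a b$. If it is, then picking such an $a$ (so $a^p = b$ is an integer) and invoking \cref{lem:pow} directly, I get $F(\left<n, n, m\right>) \geq F(\left<a, a, a^p\right>)^{\log_a n} = \bigl(F(\left<a,a,a^p\right>)^{\log_a 2}\bigr)^{\log_2 n} = F(\left<2,2,2^p\right>)^{\log n}$, using $\log_a n = \log_a 2 \cdot \log_2 n$ and the definition of $F(\left<2,2,2^p\right>)$; here $\log$ denotes $\log_2$ as elsewhere in the paper. If $p$ is not of this form, I use the definition $F(\left<2,2,2^p\right>) = \inf\{F(\left<2,2,2^P\right>) : P \geq p,\ P = \log_a b\}$. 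For any such $P \geq p$ we have $m \geq n^p \geq \min(1, n^{P}) $; more carefully, if $n \geq 1$ then $n^P \geq n^p$ fails in the wrong direction, so I instead argue that the monotonicity of matrix multiplication tensors in their third argument (stated in the commented-out lemma, and standard: $\left<n,n,m\right> \geq \left<n,n,m'\right>$ for $m \geq m'$, hence $F(\left<n,n,m\right>) \geq F(\left<n,n,m'\right>)$) together with $m \geq n^p$ lets me only compare $m$ with $n^p$ directly, and for $P \geq p$ I should instead use that $\left<n,n,m\right>$ with $m \geq n^p$ — I want a lower bound, so I want to compare with a \emph{smaller} exponent, but $P \geq p$ gives a larger one.

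The resolution, and the step I expect to be the main obstacle, is getting the direction of the inequality right in the non-logarithmic case: I want $F(\left<n,n,m\right>) \geq F(\left<2,2,2^p\right>)^{\log n}$ where the right side is an \emph{infimum over $P \geq p$}, so it suffices to show $F(\left<n,n,m\right>) \geq F(\left<2,2,2^P\right>)^{\log n}$ for \emph{every} $P \geq p$ — but that is false in general, since larger $P$ makes $F(\left<2,2,2^P\right>)$ larger. The correct reading is that the infimum makes $F(\left<2,2,2^p\right>)$ \emph{smaller}, so I only need the bound for $P$ close to $p$ from above. Concretely: fix $\eps > 0$ and choose a logarithmic $P$ with $p \leq P \leq p + \eps$; then since $m \geq n^p$, for $n$ large enough $m \geq n^{p}$ does not immediately give $m \geq n^{P}$, so I instead replace $n$ by $n' = \lfloor m^{1/P}\rfloor$ or use a density argument: pick integers $s,t$ with $p < s/t$, note $m^t \geq n^{ps} $ is too weak. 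The clean fix — which I would ultimately carry out — is to mimic the proof of \cref{lem:pow}: for rational $s/t < \log n$ and logarithmic $P \in (p, s/t \cdot \text{something})$, use $F(\left<n,n,m\right>^{\otimes t}) = F(\left<n^t, n^t, m^t\right>)$, and since $m^t \geq n^{pt} \geq n^{Ps/t \cdot (t/s)}$... rather, choose the rational approximations so that $n^{t} \geq a^{s}$ and $m^{t} \geq a^{Ps}$ simultaneously hold for suitable $P$ slightly above $p$ and suitable integer $a$; this is possible because $m \geq n^p$ and $P$ can be taken arbitrarily close to $p$. Then $F(\left<n^t,n^t,m^t\right>) \geq F(\left<a^s, a^s, a^{Ps}\right>) = F(\left<a,a,a^P\right>)^s$, giving $F(\left<n,n,m\right>) \geq F(\left<a,a,a^P\right>)^{s/t} \geq F(\left<2,2,2^P\right>)^{(s/t)\log 2}$, and letting $s/t \to \log n$ and $P \to p^+$ yields the claim by the infimum definition. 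Finally, the integer-$p$ case of the stated lemma is just $a = 2$ in \cref{lem:pow}, for which $\left<2,2,2^p\right>$ is an honest tensor and $F$ on it as tensor and quasitensor agree, as already noted.
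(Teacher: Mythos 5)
Your proposal is correct in substance, and it is actually more detailed than the paper, which gives no proof at all: \cref{lem:powers} is introduced with the words ``Lemma~\ref{lem:pow} can be rephrased as follows'', the point being that after unfolding the definition of $F(\left<2,2,2^p\right>)$ the statement \emph{is} \cref{lem:pow}. Your handling of the case $p=\log_a b$ is exactly that unfolding and is fine. For non-logarithmic $p$ your final argument (take logarithmic $P$ slightly above $p$ and rationals $s/t$ with $a^s\leq n^t$ and $a^{Ps}\leq m^t$, then pass to the limit) does work, but the writeup needs repair: the exponent in $F(\left<a,a,a^P\right>)^{s/t}\geq F(\left<2,2,2^P\right>)^{(s/t)\log 2}$ should be $(s/t)\log a$, not $(s/t)\log 2$; ``letting $s/t\to\log n$'' should be ``letting $s/t\to\log_a n$''; and for fixed $P$ the admissible $s/t$ are capped at $\log_a m/P$, which may be strictly below $\log_a n$, so the limits $P\to p^{+}$ and $s/t\to\log_a n$ must be taken jointly --- your parenthetical acknowledges this coupling but your concluding sentence glosses over it. The exploratory dead ends in the middle (e.g.\ ``$m\geq n^p\geq\min(1,n^P)$'') should simply be deleted. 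Finally, the whole difficulty can be bypassed: for $n\geq 2$ set $P=\log_n m$, which is of the form $\log_a b$ with $a=n$, $b=m$, and satisfies $P\geq p$; then by definition $F(\left<2,2,2^P\right>)^{\log n}=2^{\log_n F(\left<n,n,m\right>)\cdot\log n}=F(\left<n,n,m\right>)$, and since $F(\left<2,2,2^P\right>)\geq F(\left<2,2,2^p\right>)$ (by the infimum definition when $p$ is not logarithmic, and by your case-1 computation combined with \cref{lem:pow} when it is), the lemma follows in one line.
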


\begin{corollary}\label{cor:mon}
    For every adequate $F$ the function $p \mapsto F(\left<2, 2, 2^p\right>)$ is monotone.
\end{corollary}
\begin{proof}
    Let $0 \leq p < q$. If $q = \log_a b$ for some positive integers $a, b$, then 
    \[F(\left<2, 2, 2^p\right>) \leq F(\left<a, a, b\right>)^{\frac{1}{\log a}} = 2^{\log_a F(\left<a,a,b\right>)} = F(\left<2,2,2^q\right>).\]
    If $q$ is not an exact logarithm, then for every $Q \geq q$ such that $Q = \log_a b$ with integer $a, b$ we have $F(\left<2,2,2^p\right>) \leq F(\left<2,2,2^Q\right>)$ and therefore $F(\left<2,2,2^p\right>) \leq F(\left<2,2,2^q\right>)$ from the definition of $F(\left<2,2,2^q\right>)$ as an infinum.
\end{proof}

\begin{lemma}
    For every adequate $F$ the function $p \mapsto F(\left<2, 2, 2^p\right>)$ is continuous.
\end{lemma}
\begin{proof}
    Let $0 \leq p < q$. Choose integers $a, b, c$ such that $0 \leq \frac{a}{c} \leq p < q \leq \frac{b}{c}$ and $|\frac{b}{c} - \frac{a}{c}| \leq 2|q - p|$.
    By MaMu-multiplicativity of $F$ we have 
    \[F(\langle 2^c, 2^c, 2^b\rangle) = F(\langle2^c, 2^c, 2^a\rangle) F\left(\left<1,1,2\right>\right)^{b - a}.\]
    Note that $\frac{a}{c} = \log_{2^c}{2^a}$ and $\frac{b}{c} = \log_{2^c}{2^b}$. Therefore
    \begin{multline*}\log F(\langle2,2,2^{\frac{b}{c}}\rangle) - \log F(\langle2,2,2^{\frac{a}{c}}\rangle)\\ = \log_{2^c} F(\langle2^c, 2^c, 2^b\rangle) - \log_{2^c} F(\left<2^c, 2^c, 2^a\right>) = \frac{b - a}{c} \log_{2} F(\left<1, 1, 2\right>).\end{multline*}
    From the monotonicity of $F$ (\autoref{cor:mon}) it follows that
    \begin{multline*}
    \log F(\left<2,2,2^q\right>) - \log F(\left<2,2,2^p\right>) \leq \log F(\langle 2,2,2^{\frac{b}{c}}\rangle) - \log F(\langle2,2,2^{\frac{a}{c}}\rangle)\\ =
    \log_{2} F(\left<1, 1, 2\right>) \Bigl(\frac{b}{c} - \frac{a}{c}\Bigr) \leq 2\log_2 F(\left<1,1,2\right>)(q - p)
    \end{multline*}
    and thus the monotone function $p \mapsto \log F(\left<2,2,2^p\right>)$ is continuous (and in fact Lipschitz continuous).
\end{proof}

\begin{lemma}\label{quasi}
For any real $p\geq0$ and adequate $F$, 
\[
F(\left<2, 2, 2^p\right>) = F(\left<2, 1, 1\right>) F(\left<1, 2, 1\right>) F(\left<1, 1, 2\right>)^p.
\]
\end{lemma}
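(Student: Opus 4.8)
The plan is to reduce the statement about the quasitensor $\langle 2,2,2^p\rangle$ to facts about the matrix multiplication tensors $\langle a,a,a^p\rangle$ and then exploit the mamu-$\otimes$-multiplicativity from \cref{basicprops}(iii). First I would treat the case where $p = \log_a b$ for integers $a,b$, so that $\langle a,a,a^p\rangle = \langle a,a,b\rangle$ is an honest tensor. The key identity is that $\langle a,a,b\rangle = \langle a,1,1\rangle \otimes \langle 1,a,1\rangle \otimes \langle 1,1,b\rangle$, which follows directly from the Kronecker product rule $\langle n_1,n_2,n_3\rangle \otimes \langle m_1,m_2,m_3\rangle = \langle n_1m_1, n_2m_2, n_3m_3\rangle$ recalled in the introduction. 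Applying mamu-$\otimes$-multiplicativity (these are all matrix multiplication tensors) gives $F(\langle a,a,b\rangle) = F(\langle a,1,1\rangle)\,F(\langle 1,a,1\rangle)\,F(\langle 1,1,b\rangle)$. Similarly $\langle a,1,1\rangle = \langle 2,1,1\rangle^{\otimes \log_2 a}$ in the appropriate sense — or more carefully, for integers, $F(\langle a,1,1\rangle) = F(\langle 2,1,1\rangle)^{\log_2 a}$ follows from the same multiplicativity applied to $\langle 2^k,1,1\rangle = \langle 2,1,1\rangle^{\otimes k}$ together with the observation (as in the proof of \cref{lem:pow}) that $\log_2 F(\langle 2,1,1\rangle)$ is really $\log_c F(\langle c,1,1\rangle)$ for any integer $c$, using rational approximation. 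The same applies to $\langle 1,a,1\rangle$ and to $\langle 1,1,b\rangle$ with $b = a^p$.

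Putting these together, for $p = \log_a b$ I would compute
\[
F(\langle a,a,a^p\rangle) = F(\langle 2,1,1\rangle)^{\log_2 a}\,F(\langle 1,2,1\rangle)^{\log_2 a}\,F(\langle 1,1,2\rangle)^{\log_2 b},
\]
and then raise to the power $\log_a 2$ (i.e. divide the exponent of $2$ in $F(\langle 2,2,2^p\rangle) = 2^{\log_a F(\langle a,a,a^p\rangle)}$). Since $\log_a b = p$ gives $\log_2 b = p \log_2 a$, the exponents collapse to $\log_a 2 \cdot \log_2 a = 1$ for the first two factors and $\log_a 2 \cdot p\log_2 a = p$ for the third, yielding exactly
\[
F(\langle 2,2,2^p\rangle) = F(\langle 2,1,1\rangle)\,F(\langle 1,2,1\rangle)\,F(\langle 1,1,2\rangle)^p.
\]

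For the remaining case where $p$ is not of the form $\log_a b$, I would use the definition $F(\langle 2,2,2^p\rangle) = \inf\{F(\langle 2,2,2^P\rangle) : P \geq p,\ P = \log_a b\}$, apply the already-established formula to each such $P$, and observe that the right-hand side $F(\langle 2,1,1\rangle)F(\langle 1,2,1\rangle)F(\langle 1,1,2\rangle)^P$ is monotone in $P$ (since $F(\langle 1,1,2\rangle) \geq 1$, which holds because $\langle 1\rangle \leq \langle 1,1,2\rangle$ and $F$ is $\leq$-monotone with $F(\langle 1\rangle) = 1$ — or at worst one notes $F(\langle 1,1,2\rangle) = F(\langle 2\rangle)/(F(\langle 1,1,1\rangle)\cdots)$-type bookkeeping, but monotonicity is the clean route). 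The infimum over $P$ decreasing to $p$ through the dense set of values $\log_a b$ then gives the exponent $p$ in the limit, and continuity of $t \mapsto F(\langle 1,1,2\rangle)^t$ finishes it.

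The main obstacle I anticipate is purely bookkeeping rather than conceptual: making the "$F(\langle a,1,1\rangle) = F(\langle 2,1,1\rangle)^{\log_2 a}$" step fully rigorous requires repeating the rational-approximation argument from the proof of \cref{lem:pow} in the degenerate formats $\langle n,1,1\rangle$, and one must be slightly careful that mamu-$\otimes$-multiplicativity as stated applies to these thin matrix multiplication tensors (it does — $\langle n,1,1\rangle$ is a matrix multiplication tensor). A secondary subtlety is ensuring the infimum in the non-$\log_a b$ case is actually attained in the limit as a genuine equality, which needs the density of $\{\log_a b : a,b \in \NN\}$ in $\RR_{\geq 0}$ together with the continuity just mentioned; neither is hard but both should be stated.
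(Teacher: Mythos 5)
Your proposal is correct and follows essentially the same route as the paper: decompose $\langle a,a,b\rangle$ as a Kronecker product of thin matrix multiplication tensors, apply mamu-$\otimes$-multiplicativity, establish $F(\langle a,1,1\rangle)=F(\langle 2,1,1\rangle)^{\log_2 a}$ by the rational-approximation argument, and handle non-$\log_a b$ values of $p$ by continuity. The only difference is that you spell out the density/monotonicity details of the final continuity step, which the paper leaves implicit.
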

\begin{proof}
We have
  $F(\left<a, 1, 1\right>) = F(\left<2, 1, 1\right>)^{\log a}$
  because if $\log a \leq \frac{b}{c}$, then $a^c \leq 2^b$ and $F(\left<a, 1, 1\right>)^c \leq F(\left<2, 1, 1\right>)^b$,
  and if $\log a \geq \frac{b}{c}$, then $F(\left<a, 1, 1\right>)^c \geq F(\left<2, 1, 1\right>)^b$.
  Analogous results hold for $\left<1, a, 1\right>$ and $\left< 1, 1, a \right>$.

Suppose $p = \log_a b$. Then
  \begin{multline*}
    \log F(\left<2, 2, 2^p\right>) = \log_a F(\left<a, a, b\right>) = \log_a \left[ F(\left<a, 1, 1\right>) F(\left<1, a, 1\right>) F(\left<1, 1, b\right>) \right] \\= \log F(\left<2, 1, 1\right>) + \log F(\left<1, 2, 1\right>) + p \log F(\left<1, 1, 2\right>).
  \end{multline*}
  For arbitrary $p$ the result follows by a continuity argument.
\end{proof}

\begin{lemma}\label{lem:adeq-cont}
  If $m = n^{p + o(1)}$, then $\log_n F(\left<n, n, m\right>) = \log F(\left<2, 2, 2^p\right>) + o(1)$ for every adequate~$F$.
\end{lemma}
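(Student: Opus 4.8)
The plan is to reduce the claimed asymptotic statement to the already-established Lemma~\ref{lem:powers} together with a matching upper bound, using the submultiplicativity and monotonicity of $F$ in the two directions. Writing $m = n^{p+o(1)}$ means there is a function $\delta(n) \to 0$ with $n^{p-\delta(n)} \le m \le n^{p+\delta(n)}$. For the lower bound on $\log_n F(\langle n,n,m\rangle)$, I would apply Lemma~\ref{lem:powers} with the exponent $p - \delta(n)$: since $m \ge n^{p-\delta(n)}$ we get $F(\langle n,n,m\rangle) \ge F(\langle 2,2,2^{\,p-\delta(n)}\rangle)^{\log n}$, hence $\log_n F(\langle n,n,m\rangle) \ge \log F(\langle 2,2,2^{\,p-\delta(n)}\rangle)$. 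For the upper bound, I would use monotonicity of $F$ in the last argument, $\langle n,n,m\rangle \le \langle n,n,\lceil n^{p+\delta(n)}\rceil\rangle$ (padding the last index), combined with the analogue of Lemma~\ref{lem:pow} in the other direction to bound $F(\langle n,n,\lceil n^{p+\delta(n)}\rceil\rangle)$ from above by (essentially) $F(\langle 2,2,2^{\,p+\delta(n)}\rangle)^{\log n}$, up to a $1+o(1)$ factor absorbing the ceiling.

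The second step is where the real care is needed, so I would make it the focus. One clean way: use Lemma~\ref{quasi}, which gives the explicit formula $F(\langle 2,2,2^p\rangle) = F(\langle 2,1,1\rangle)F(\langle 1,2,1\rangle)F(\langle 1,1,2\rangle)^p$, valid for all real $p \ge 0$ and continuous in $p$. This formula simultaneously handles the lower-bound side (no need to fuss about whether $p-\delta(n)$ is of the form $\log_a b$) and makes the limiting argument transparent: as $n \to \infty$, both $\log F(\langle 2,2,2^{\,p-\delta(n)}\rangle)$ and $\log F(\langle 2,2,2^{\,p+\delta(n)}\rangle)$ converge to $\log F(\langle 2,2,2^p\rangle)$ because the right-hand side of Lemma~\ref{quasi} is a continuous (affine-in-$p$, in the exponent) function of the exponent. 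So once both inequalities
\[
\log F(\langle 2,2,2^{\,p-\delta(n)}\rangle) + o(1) \;\le\; \log_n F(\langle n,n,m\rangle) \;\le\; \log F(\langle 2,2,2^{\,p+\delta(n)}\rangle) + o(1)
\]
are in hand, squeezing gives $\log_n F(\langle n,n,m\rangle) = \log F(\langle 2,2,2^p\rangle) + o(1)$.

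For the upper bound inequality itself I would argue as follows. Let $P_n = p + \delta(n)$ and let $a$ be a fixed integer with $a^{P}$ integer for a rational $P \ge P_n$ arbitrarily close to $P_n$ (such $P$ of the form $\log_a b$ are dense); alternatively work directly with the ceiling. By monotonicity, $F(\langle n,n,m\rangle) \le F(\langle n, n, \lceil n^{P}\rceil\rangle)$. Now $\langle n,n,\lceil n^P\rceil\rangle \le \langle n,n,n^P\rangle \otimes \langle 1,1,2\rangle$ once $\lceil n^P \rceil \le 2 n^P$, i.e.\ for all $n$; so $F(\langle n,n,\lceil n^P\rceil\rangle) \le F(\langle n,n,n^P\rangle) \cdot F(\langle 1,1,2\rangle)$ by $\otimes$-submultiplicativity, and then the same tensor-power argument as in the proof of Lemma~\ref{lem:pow} (using submultiplicativity of $F$ under $\otimes$ in place of supermultiplicativity) yields $F(\langle n,n,n^P\rangle) \le F(\langle a,a,a^P\rangle)^{\lceil \log_a n\rceil}$, whence $\log_n F(\langle n,n,n^P\rangle) \le \log_a F(\langle a,a,a^P\rangle) + o(1) = \log F(\langle 2,2,2^P\rangle) + o(1)$ by the definition of the quasitensor. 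Taking $P \downarrow P_n$ and then $n \to \infty$, and using continuity from Lemma~\ref{quasi}, finishes it.

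The main obstacle I anticipate is purely bookkeeping rather than conceptual: juggling the two little error terms — the $\delta(n)$ in $m = n^{p+o(1)}$ and the rounding errors from ceilings and from $\log_a n$ not being an integer — so that they all collapse into a single $o(1)$, and making sure the direction of submultiplicativity versus supermultiplicativity is used correctly on each side (lower bound uses that $F$ is at least a product of values on factors of a restriction-target, i.e. monotonicity of $F$ under restriction; upper bound uses $\otimes$-submultiplicativity). Invoking Lemma~\ref{quasi} up front largely neutralizes the first issue, since it converts everything into a continuous elementary function of the exponent and removes any dependence on arithmetic properties of $p \pm \delta(n)$.
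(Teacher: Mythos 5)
Your proof is correct in substance, but it takes a different and somewhat longer route than the paper. You sandwich $\log_n F(\langle n,n,m\rangle)$ between $\log F(\langle 2,2,2^{p-\delta(n)}\rangle)$ and $\log F(\langle 2,2,2^{p+\delta(n)}\rangle)$, getting the lower bound from \cref{lem:powers} and the upper bound from monotonicity plus a reversed tensor-power argument, and then squeeze using the continuity in $p$ supplied by \cref{quasi}. The paper instead gets an exact identity in one line: since $\langle n,n,m\rangle=\langle n,1,1\rangle\otimes\langle 1,n,1\rangle\otimes\langle 1,1,m\rangle$ and $F$ is multiplicative on matrix multiplication tensors (\cref{basicprops}(iii)), one has $F(\langle n,n,m\rangle)=F(\langle n,1,1\rangle)F(\langle 1,n,1\rangle)F(\langle 1,1,m\rangle)$ exactly, and the scaling identity $F(\langle a,1,1\rangle)=F(\langle 2,1,1\rangle)^{\log a}$ from the proof of \cref{quasi} turns this into $\log_n F(\langle n,n,m\rangle)=\log F(\langle 2,1,1\rangle)+\log F(\langle 1,2,1\rangle)+\log_n(m)\log F(\langle 1,1,2\rangle)$, which is $\log F(\langle 2,2,2^p\rangle)+o(1)$ since $\log_n m=p+o(1)$. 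What your approach buys is robustness: it never needs exact multiplicativity on the three-factor decomposition, only the one-sided estimates, at the cost of the ceiling/rounding bookkeeping you flag (e.g.\ your intermediate object $\langle n,n,n^P\rangle$ with $n^P$ non-integral needs the same quasitensor care you apply elsewhere). What the paper's approach buys is that all of that bookkeeping disappears, because the mamu-multiplicativity of the support and quantum functionals makes the quantity computable exactly.
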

\begin{proof} We have
$F(\langle n,n,m\rangle) = F(\langle n,1,1\rangle) F(\langle1,n,1\rangle) F(\langle 1,1,m\rangle)$
and so
\begin{align*}
\log_n F(\langle n,n,m\rangle) &= \log F(\langle 2,1,1\rangle) + \log F(\langle 1,2,1\rangle) + \log_n(m) \log F(\langle 1,1,2\rangle)\\
&=\log F(\langle 2,2,2^p\rangle) + o(1) F(\langle 1,1,2\rangle),
\end{align*}
which proves the claim.
\end{proof}

\subsection{Barriers for $T$-methods}

For any tensor $T$ we define the notion of a $T$-method for upper bounds on~$\omega(p)$ as follows.

\begin{definition}[$T$-method] Suppose $\asymprank(T) \leq r$. Suppose we are given a collection of inequalities $\left<n, n, m\right>^{\oplus s} \leq T^{\otimes k}$ with $n^p \leq m$. Then \autoref{thm:tau} gives the upper bound~$\omega(p) \leq \hat{\omega}(p)$ where $\hat{\omega}(p) = \inf \{ k \log_n r - \log_n s \}$ where the infimum is taken over all $k, n, s$ appearing in the collection of inequalities. We then say $\hat{\omega}(p)$ is obtained by a \emph{$T$-method}.

We say that the $T$-method is \emph{$\kappa$-catalytic} if the set of values of $n$ is unbounded, the bound $\hat{\omega}(p)$ is not attained on any one reduction of the method (so $\hat{\omega}(p) = \lim \inf \{ k \log_n r - \log_n s \}$), and in any reduction we have $s \geq C n^\kappa$ for some constant~$C$.
\end{definition}

Note that for while for general $T$-methods we allow degenerate cases when the upper bound~$\hat{\omega}(p)$ is given already by one of the reductions of the method, we are mostly interested in methods where the upper bound appears as a limit for some sequence of reductions, so ``$\inf$'' in the definition of $\hat{\omega}(p)$ can be replaced by ``$\lim \inf$''. In particular, we require this behaviour for catalytic methods as it is used to obtain better barrier results in this case.

\begin{theorem}
  \label{thm:general}
  Any upper bound $\hat{\omega}(p)$ on $\omega(p)$ obtained by
  a $T$-method satisfies
  \begin{equation*}
    \hat{\omega}(p) \geq \frac{\log F(\left<2, 2, 2^p\right>) \log \asymprank(T)}{\log F(T)}
  \end{equation*}
  for every adequate $F$.
  
  Moreover, if the method is $\kappa$-catalytic, then %
  \begin{equation*}
    \hat{\omega}(p) \geq \frac{\log F(\left<2, 2, 2^p\right>) \log \asymprank(T)}{\log F(T)} + \kappa \left(\frac{\log \asymprank(T)}{\log F(T)} - 1\right).
  \end{equation*}
\end{theorem}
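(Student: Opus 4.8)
The plan is to confront a single reduction from the collection with the support functional $F$, extract from it a lower bound on the tensor power $k$, and then feed that back into the cost expression $k\log_n r-\log_n s$, exploiting the slack between $\log F(T)$ and $\log\asymprank(T)$ guaranteed by \cref{basicprops}(v). Fix one inequality $\left<n,n,m\right>^{\oplus s}\le T^{\otimes k}$ with $m\ge n^p$ belonging to the method. Applying $F$ and using $\leq$-monotonicity, $\oplus$-additivity and $\otimes$-submultiplicativity (\cref{basicprops}(i),(iv),(ii)) together with \cref{lem:powers} gives
\[
s\cdot F(\left<2,2,2^p\right>)^{\log n}\ \le\ F(\left<n,n,m\right>^{\oplus s})\ \le\ F(T)^{k}.
\]
Taking logarithms --- the only nontrivial case being $\log F(T)>0$ --- this reads $\log s+\log n\cdot\log F(\left<2,2,2^p\right>)\le k\log F(T)$, so the method is forced to take $k$ at least $\bigl(\log s+\log n\cdot\log F(\left<2,2,2^p\right>)\bigr)/\log F(T)$; in particular, since $s\ge1$, we have $k\ge\log n\cdot\log F(\left<2,2,2^p\right>)/\log F(T)$.

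Next I bound the cost of this reduction. Since $\asymprank(T)\le r$ we have $k\log_n r-\log_n s\ge k\log_n\asymprank(T)-\log_n s$, so it suffices to lower bound $\tfrac1{\log n}\bigl(k\log\asymprank(T)-\log s\bigr)$. Substituting the upper bound $\log s\le k\log F(T)-\log n\cdot\log F(\left<2,2,2^p\right>)$ from the previous paragraph gives
\[
k\log\asymprank(T)-\log s\ \ge\ k\bigl(\log\asymprank(T)-\log F(T)\bigr)+\log n\cdot\log F(\left<2,2,2^p\right>),
\]
and now, using $\log\asymprank(T)\ge\log F(T)$ so that the coefficient of $k$ is nonnegative, together with the forced lower bound on $k$, this is at least $\log n\cdot\log F(\left<2,2,2^p\right>)\cdot\log\asymprank(T)/\log F(T)$. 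Dividing by $\log n$ shows every reduction in the collection has cost at least $\log F(\left<2,2,2^p\right>)\log\asymprank(T)/\log F(T)$, hence so does $\hat\omega(p)$, their infimum. Rewriting $F(\left<2,2,2^p\right>)$ via \cref{quasi} recovers the product form from the introduction.

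For the $\kappa$-catalytic statement I repeat the second step but keep the term $\log s$ instead of discarding it. By hypothesis $s\ge Cn^\kappa$, so $\log s\ge\kappa\log n+\log C$, and since the values of $n$ are unbounded the additive constant $\log C$ washes out after dividing by $\log n$ and passing to the $\liminf$ that defines $\hat\omega(p)$ in the catalytic case. The forced lower bound improves to $k\ge\bigl((\kappa+\log F(\left<2,2,2^p\right>))\log n+\log C\bigr)/\log F(T)$, and feeding this into $k\bigl(\log\asymprank(T)-\log F(T)\bigr)+\log n\cdot\log F(\left<2,2,2^p\right>)$, dividing by $\log n$ and taking $\liminf$ yields
\[
\hat\omega(p)\ \ge\ \frac{\log F(\left<2,2,2^p\right>)\log\asymprank(T)}{\log F(T)}+\kappa\Bigl(\frac{\log\asymprank(T)}{\log F(T)}-1\Bigr),
\]
again using $\log\asymprank(T)\ge\log F(T)$ to keep all coefficients nonnegative.

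The routine parts --- the logarithm manipulations, the $\liminf$ bookkeeping, and the degenerate cases $\log F(T)=0$ or $n=1$ --- are straightforward. The one load-bearing idea, and the step I would expect a first attempt to miss, is that applying $F$ naively to the whole chain only gives $\hat\omega(p)\ge\log F(\left<2,2,2^p\right>)$, which is far too weak: the improvement comes from noticing that $F$ forces $k$ to be large, while each tensor factor of $T$ contributes only $\log F(T)$ toward this $F$-budget but costs $\log\asymprank(T)$ toward $\hat\omega(p)$, so the inefficiency ratio $\log\asymprank(T)/\log F(T)$ multiplies in. Formally, the bound is exactly the optimum of the two-variable linear program ``minimize $k\log_n\asymprank(T)-\log_n s$ subject to $\log_n s+\log F(\left<2,2,2^p\right>)\le k\log_n F(T)$ and $\log_n s\ge0$ (resp.\ $\ge\kappa$)'', so once that constraint is written down the conclusion is forced.
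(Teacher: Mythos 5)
Your proof is correct and follows essentially the same route as the paper: apply $F$ to a single reduction, use monotonicity, additivity, submultiplicativity and \cref{lem:powers} to get the constraint $\log_n s + \log F(\left<2,2,2^p\right>) \leq k\log_n F(T)$, and combine it with the cost $k\log_n\asymprank(T) - \log_n s$ using $F(T)\leq\asymprank(T)$. The only difference is cosmetic --- you solve the resulting two-variable linear program by direct substitution, whereas the paper packages the same algebra as a mediant-type inequality on the ratio $(\hat\omega(p)+\log_n s)/(\log F(\left<2,2,2^p\right>)+\log_n s)$.
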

\begin{proof}
It is enough to prove the inequality for one reduction $T^{\otimes k} \geq \left<n, n, m\right>^{\oplus s}$ with $m \geq n^p$, which gives an upper bound $\hat{\omega}(p) = k \log_n \asymprank(T) - \log_n s$. %

  Using \autoref{lem:powers} and superadditivity of $F$, we have
  \[
    F(\left<n, n, m\right>^{\oplus s}) \geq s F(\left<n, n, m\right>) \geq s F(\left<2, 2, 2^p\right>)^{\log n}.
  \]
  Therefore $k \log_n F(T) \geq \log_n F(T^{\otimes k}) \geq \log F(\left<2, 2, 2^p\right>) + \log_n s$.
  For $\hat{\omega}(p)$ we get
  \begin{equation*}
    \frac{\hat{\omega}(p) + \log_n s}{\log F(\left<2, 2, 2^p\right>) + \log_n s} \geq \frac{k \log_n \asymprank(T)}{k \log_n F(T)} = \frac{\log \asymprank(T)}{\log F(T)}.
  \end{equation*}
Since $F(T) \leq \asymprank(T)$, we have $\hat{\omega}(p) + \log_n s \geq \log F(\left<2, 2, 2^p\right>) + \log_n s$ and therefore
  \begin{equation*}
    \frac{\hat{\omega}(p)}{\log F(\left<2, 2, 2^p\right>)} \geq \frac{\hat{\omega}(p) + \log_n s}{\log F(\left<2, 2, 2^p\right>) + \log_n s}.
  \end{equation*}
  If the method is $\kappa$-catalytic, then $\log_n s \geq \kappa + O(\frac{1}{\log n})$, and as $n \to \infty$ we have
  \begin{equation*}
    \frac{\hat{\omega}(p) + \kappa}{\log F(\left<2, 2, 2^p\right>) + \kappa} \geq \frac{\log \asymprank(T)}{\log F(T)}.
  \end{equation*}
This concludes the proof.
\end{proof}

\begin{remark}
    Note that in the definition of $\kappa$-catalytic we require that the infimum in $\hat\omega(p)$ is not a minimum. This is indeed what happens in the modern constructions of matrix multiplication algorithms. This requirement allows us to let $n$ go to infinity in the proof of \autoref{thm:general} to get rid of the $O(1/\log n)$ term.
\end{remark}

\subsection{Barriers for asymptotic $T$-methods}
To cover the methods that are used in practice we need the following notion.
\begin{definition}[Asymptotic $T$-method.] Let $T$ be a tensor. Suppose $\asymprank(T) \leq r$. Suppose we are given a collection of inequalities $\left<n, n, m\right>^{\oplus s} \leq T^{\otimes k}$ where the values of $n$ are unbounded and $m \geq f(n)$ for some function $f(n) = n^{p + o(1)}$. Then $\omega(p)$ is at most $\hat{\omega}(p)$ where $\hat{\omega}(p) = \lim \inf \{ k \log_n r - \log_n s \}$ where the limit is taken over all $k, n, s$ appearing in the collection of inequalities as $n \to \infty$. We say $\hat{\omega}(p)$ is obtained by an \emph{asymptotic $T$-method}.

We say that the asymptotic $T$-method is \emph{$\kappa$-catalytic} if in any inequality we have~$s \geq C n^\kappa$ for some constant~$C$.
\end{definition}

\begin{remark}
This class of methods works because each reduction $T^{\otimes k} \geq \left<n, n, m\right>^{\oplus s}$
gives an upper bound $\omega(q) \leq k \log_n r - \log_n s$ where $q = \log m \geq \log f(n) \to p$.
As the function $\omega(p)$ is continuous~\cite{DBLP:journals/tcs/LottiR83}, we get the required bound on $\omega(p)$ in the limit.
\end{remark}

\begin{remark}
The usual descriptions of the laser method applied to rectangular matrix multiplication
result in an asymptotic method because the construction involves an approximation of a certain probability distribution by a rational probability distribution.
As a result of this approximation, the matrix multiplication tensor constructed may have
format slightly smaller than $\left<n, n, n^p\right>$.
\end{remark}

\begin{theorem}
  \label{thm:approx}
 Any upper bound $\hat{\omega}(p)$ obtained by an
  asymptotic $T$-method satisfies %
  \begin{equation*}
    \hat{\omega}(p) \geq \frac{\log F(\left<2, 2, 2^p\right>) \log \asymprank(T)}{\log F(T)}
  \end{equation*}
  for every adequate $F$.
  
  For $\kappa$-catalytic methods
  \begin{equation*}
    \hat{\omega}(p) \geq \frac{\log F(\left<2, 2, 2^p\right>) \log \asymprank(T)}{\log F(T)} + \kappa \left(\frac{\log \asymprank(T)}{\log F(T)} - 1\right).
  \end{equation*}
\end{theorem}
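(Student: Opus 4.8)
The plan is to repeat the proof of \cref{thm:general} essentially unchanged, one reduction at a time, and only pass to the limit at the very end; the single new ingredient is that in the asymptotic $T$-method each reduction certifies a bound for an exponent $q$ that merely tends to $p$, rather than for $p$ itself. So fix one inequality $T^{\otimes k} \geq \langle n,n,m\rangle^{\oplus s}$ from the collection defining the method and put $q := \log_n m$, so that $m = n^q$ and, by hypothesis, $q \geq \log_n f(n) = p + o(1)$ as $n \to \infty$. It is enough to treat $r = \asymprank(T)$, since a larger $r$ only enlarges each term $k\log_n r - \log_n s$; thus the contribution of this reduction to $\hat\omega(p)$ is $\theta := k\log_n \asymprank(T) - \log_n s$, and $\hat\omega(p) = \liminf \theta$ over the collection. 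I will also assume $F(T) > 1$; if $F(T) = 1$, the chain of inequalities below forces $\log_n s + (\log n)\log F(\langle2,2,2^q\rangle) \leq 0$, impossible for $n \geq 2$ since both summands are nonnegative, so in that case the collection is empty and the asserted (infinite) bound holds vacuously.

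For this reduction I would reproduce the estimate of \cref{thm:general} verbatim, with $q$ in place of $p$. By \cref{lem:powers} applied with exponent $q$, together with $\oplus$-additivity of $F$,
\[
F(\langle n,n,m\rangle^{\oplus s}) \;\geq\; s\, F(\langle n,n,m\rangle) \;\geq\; s\, F(\langle 2,2,2^q\rangle)^{\log n},
\]
and combining this with $\leq$-monotonicity ($F(T^{\otimes k}) \geq F(\langle n,n,m\rangle^{\oplus s})$) and $\otimes$-submultiplicativity ($F(T^{\otimes k}) \leq F(T)^k$) and taking $\log_n$ gives $k \log_n F(T) \geq \log F(\langle 2,2,2^q\rangle) + \log_n s$. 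Writing $\rho = \log\asymprank(T)$, $\psi = \log F(T)$ and $\phi = \log F(\langle2,2,2^q\rangle)$, this reads $k/\log n \geq (\log_n s + \phi)/\psi$, so
\[
\theta = \frac{k\rho}{\log n} - \log_n s \;\geq\; \log_n s\Bigl(\frac{\rho}{\psi} - 1\Bigr) + \frac{\phi\,\rho}{\psi}.
\]
Property (v) of \cref{basicprops} gives $F(T) \leq \asymprank(T)$, i.e.\ $\psi \leq \rho$, so the coefficient $\rho/\psi - 1$ is nonnegative.

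It remains to let $n \to \infty$. Using \cref{quasi} I write $\phi = \log F(\langle2,2,2^p\rangle) + (q - p)\log F(\langle1,1,2\rangle)$; since $F(\langle1,1,2\rangle) \geq 1$ (an instance of $\langle n\rangle$-normalisation of the support or quantum functionals) and $q - p \geq o(1)$, this gives $\phi \geq \log F(\langle2,2,2^p\rangle) + o(1)$, so the substitution costs only an $o(1)$ on the favourable side. In the non-catalytic case I bound $\log_n s \geq 0$, so the first term of the displayed estimate is nonnegative and $\theta \geq \frac{\log F(\langle2,2,2^p\rangle)\log\asymprank(T)}{\log F(T)} + o(1)$; taking $\liminf$ yields the first claim. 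In the $\kappa$-catalytic case, $s \geq C n^\kappa$ gives $\log_n s \geq \kappa + O(1/\log n)$, and plugging this into the (nonnegative-coefficient) first term produces, in the limit, the extra summand $\kappa\bigl(\tfrac{\log\asymprank(T)}{\log F(T)} - 1\bigr)$, which is the second claim. The one spot I expect to need genuine care is precisely this limit passage: one must check that replacing $p$ by $q = \log_n m$ costs at most an $o(1)$ in the right direction, and this is exactly what $q \geq p + o(1)$ combined with the monotonicity of $x \mapsto F(\langle2,2,2^x\rangle)$ (immediate from \cref{quasi} and $F(\langle1,1,2\rangle) \geq 1$) delivers; every other step is quoted directly from \cref{basicprops} and the proof of \cref{thm:general}.
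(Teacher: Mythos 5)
Your proposal is correct and follows essentially the same route as the paper: apply the per-reduction estimate of \cref{thm:general} with the exponent $q=\log_n m = p+o(1)$ in place of $p$, use $F(T)\leq\asymprank(T)$ to make the coefficient of $\log_n s$ nonnegative, and pass to the limit via $\log_n F(\langle n,n,m\rangle)\geq \log F(\langle 2,2,2^p\rangle)+o(1)$ and $\log_n s\geq\kappa+o(1)$. The only differences are cosmetic (you rearrange the inequality explicitly rather than using the paper's mediant-style fraction comparison, and you add a harmless side remark on the degenerate case $F(T)=1$).
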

\begin{proof}
  Suppose $T^{k} \geq \left<n, n, m\right>^{\oplus s}$. Then $\hat{\omega}_{k, s, n, m} = k \log_n \asymprank(T) - \log_n s$ is an upper bound on $\omega(p + o(1))$. Then, as in \autoref{thm:general}, we have
  \begin{equation*}
    \frac{\hat{\omega}_{k,s,n,m} + \log_n s}{\log_n F(\left<n, n, m\right>) + \log_n s} \geq \frac{\log \asymprank(T)}{\log F(T)}.
  \end{equation*}
  Because $F(T) \leq \asymprank(T)$, both fractions are greater than $1$ and for $0 \leq A \leq \log_n s$ it is true that
  \begin{equation*}
    \frac{\hat{\omega}_{k,s,n,m} + A}{\log_n F(\left<n, n, m\right>) + A} \geq \frac{\hat{\omega}(p)_{k,s,n,m} + \log_n s}{\log_n F(\left<n, n, m\right>) + \log_n s}.
  \end{equation*}
  As $n \to \infty$, we have $\log_n F(\left<n, n, m\right>) \geq \log F(\left<2, 2, 2^p\right>) + o(1)$ by \autoref{lem:adeq-cont}, and, if the method is $\kappa$-catalytic, then $\log_n s \geq \kappa + o(1)$.
  The upper bound $\hat{\omega}(p)$ given by the method is the limit $\lim \inf \hat{\omega}_{k,s,n,m}$.
  Taking $n \to \infty$, we get the required inequalities.
\end{proof}

\subsection{Barriers for mixed methods}\label{sec:mixed}

Coppersmith~\cite{DBLP:journals/jc/Coppersmith97} uses a combination of
Coppersmith--Winograd tensors of different format to get an upper bound on the rectangular
matrix multiplication exponent.
More specifically, he considers a sequence of tensors
$\CW_{7}^{\otimes 9n} \otimes \CW_{6}^{\otimes 8 \lfloor 0.6425 n \rfloor}$.
Our analysis applies to tensor sequences of this kind
because their asymptotic behaviour is similar to sequences of the form $T^{\otimes n}$ in the sense of the following two lemmas.

\begin{lemma}
  Let $S_1,S_2$ be tensors. Given functions $f_1, f_2 \colon \mathbb{N} \to \mathbb{N}$
  such that $f_i(n) = a_i n + o(n)$ for some positive real numbers $a_1, a_2$, define
  a sequence of tensors $T_n = S_1^{\otimes f_1(n)} \otimes S_2^{\otimes f_2(n)}$.
  Then for every adequate $F$ the sequence $\sqrt[n]{F(T_n)}$ is bounded from above.
\end{lemma}
\begin{proof}
  We have
  \begin{equation*}
    \sqrt[n]{F(T_n)} = \sqrt[n]{F(S_1^{\otimes f_1(n)} \otimes S_2^{\otimes f_2(n)})}
    \leq F(S_1)^{\frac{f_1(n)}{n}} F(S_2)^{\frac{f_2(n)}{n}}.
  \end{equation*}
  The right-hand side converges to $F(S_1)^{a_2} F(S_2)^{a_2}$ as $n\to\infty$ and, therefore, is bounded.
\end{proof}

\begin{lemma}
  Let $S_1,S_2$ be tensors. Given functions $f_1, f_2 \colon \mathbb{N} \to \mathbb{N}$
  such that $f_i(n) = a_i n + o(n)$ for some positive real numbers $a_1, a_2$, define
  a sequence of tensors $T_n = S_1^{\otimes f_1(n)} \otimes S_2^{\otimes f_2(n)}$.
  Then the sequence $\sqrt[n]{\asymprank(T_n)}$ converges.
\end{lemma}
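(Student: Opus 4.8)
The plan is to reduce the statement to two elementary, field-independent properties of asymptotic rank — submultiplicativity under $\otimes$ and monotonicity under restriction — and then run a Fekete-type subadditivity argument along the ray with direction $(a_1,a_2)$. First I would dispose of the degenerate case: if $S_1$ or $S_2$ is the zero tensor then $T_n=0$ and $\sqrt[n]{\asymprank(T_n)}=0$ for all $n\geq1$, so from now on assume $S_1,S_2\neq0$, put $r_i=\asymprank(S_i)\in[1,\infty)$, and for integers $a,b\geq0$ define $G(a,b)=\log\asymprank(S_1^{\otimes a}\otimes S_2^{\otimes b})$. Since $\sqrt[n]{\asymprank(T_n)}=\exp\bigl(\tfrac1n G(f_1(n),f_2(n))\bigr)$, it suffices to show that $\tfrac1n G(f_1(n),f_2(n))$ converges.

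I would then record the three properties of $G$ that the argument uses. It is subadditive, $G(a+a',b+b')\leq G(a,b)+G(a',b')$, because tensor rank, and hence asymptotic rank, is submultiplicative under $\otimes$. It is non-decreasing in each argument, because $\langle1\rangle\leq S_i$ for a nonzero tensor $S_i$, so tensoring on an extra factor cannot decrease asymptotic rank: $\asymprank(A)=\asymprank(A\otimes\langle1\rangle)\leq\asymprank(A\otimes S_i)$. And it is sandwiched: $0\leq G(a,b)\leq a\log r_1+b\log r_2<\infty$. The core step is to set $h(n)=G(\floor{na_1},\floor{na_2})$ and show that $\bigl(h(n)+G(1,1)\bigr)_n$ is a subadditive sequence. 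Indeed, using $\floor{(n+m)a_i}\leq\floor{na_i}+\floor{ma_i}+1$, then monotonicity of $G$, then subadditivity twice (peeling off a $(1,1)$-block, then splitting), one gets $h(n+m)\leq h(n)+h(m)+G(1,1)$. Fekete's lemma then yields $h(n)/n\to L:=\inf_n\frac{h(n)+G(1,1)}{n}$, which is finite, lying in $[0,\,a_1\log r_1+a_2\log r_2]$. Finally, using $f_i(n)=a_in+o(n)$, for any $\eps>0$ and all large $n$ we have $\floor{na_i}-\ceil{\eps n}\leq f_i(n)\leq\floor{na_i}+\ceil{\eps n}$, so monotonicity together with one application of subadditivity — adding or removing a $(\ceil{\eps n},\ceil{\eps n})$-block, whose $G$-value is at most $(\eps n+1)\log(r_1r_2)$ — gives $\bigl|\tfrac1n G(f_1(n),f_2(n))-\tfrac{h(n)}{n}\bigr|\leq\eps\log(r_1r_2)+o(1)$. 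Letting $n\to\infty$ and then $\eps\to0$ shows $\tfrac1n G(f_1(n),f_2(n))\to L$, and exponentiating completes the proof.

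The point that needs the most care is that $f_i(n+m)\neq f_i(n)+f_i(m)$ in general, so $\asymprank(T_n)$ is not itself sub- or super-multiplicative in $n$ and Fekete's lemma cannot be applied to it directly; the device of first passing to the ``clean'' direction $(\floor{na_1},\floor{na_2})$ and then absorbing the $o(n)$ error via the monotonicity coming from $\langle1\rangle\leq S_i$ is exactly what bridges this gap. I would emphasise that the argument uses only submultiplicativity of $\asymprank$, never the (open) equality $\asymprank(S_1\otimes S_2)=\asymprank(S_1)\asymprank(S_2)$, so it establishes existence of the limit without identifying it. If one is willing to invoke Strassen's duality theorem, there is a shorter route: writing $\asymprank(T_n)=\max_{\phi}\phi(S_1)^{f_1(n)}\phi(S_2)^{f_2(n)}$ over the compact asymptotic spectrum and noting that $\phi\mapsto\log\phi(S_i)$ is continuous and bounded between $0$ and $\log r_i$, the functions $\phi\mapsto\tfrac{f_1(n)}{n}\log\phi(S_1)+\tfrac{f_2(n)}{n}\log\phi(S_2)$ converge uniformly in $\phi$ to $a_1\log\phi(S_1)+a_2\log\phi(S_2)$, so $\tfrac1n\log\asymprank(T_n)$ converges to the maximum of the latter over the spectrum.
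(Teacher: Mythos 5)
Your proof is correct, but your primary route is genuinely different from the paper's. The paper proves convergence by invoking Strassen's spectral characterization $\asymprank(T)=\max_{\xi\in X}\xi(T)$ over the compact asymptotic spectrum $X$, using multiplicativity of each $\xi$ to get pointwise convergence of $\sqrt[n]{\xi(T_n)}$ to $\xi(S_1)^{a_1}\xi(S_2)^{a_2}$ and compactness to make this uniform, so that the maximum converges as well --- this is precisely the ``shorter route'' you sketch at the end (your justification of uniformity via the bounds $0\leq\log\xi(S_i)\leq\log r_i$ is in fact cleaner than the paper's bare appeal to compactness). Your main argument instead uses only two elementary properties of $\asymprank$ --- submultiplicativity under $\otimes$ and monotonicity under restriction via $\langle1\rangle\leq S_i$ --- together with Fekete's lemma along the direction $(\floor{na_1},\floor{na_2})$ and a careful absorption of the $o(n)$ discrepancy; the bookkeeping with the $G(1,1)$ correction term and the $(\ceil{\eps n},\ceil{\eps n})$-block is right, and your observation that Fekete cannot be applied to $\asymprank(T_n)$ directly because $f_i$ is not additive is exactly the point that needs care. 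What each approach buys: yours is self-contained and field-independent, needing no duality theorem, but it only establishes existence of the limit; the paper's identifies the limit explicitly as $\max_{\xi\in X}\xi(S_1)^{a_1}\xi(S_2)^{a_2}$, which matters downstream, since the proof of \cref{thm:mixed} uses the relation $\sqrt[k]{\asymprank(T_k)}=\asymprank(\{T_k\})+o(1)$ with $\asymprank(\{T_k\})$ as a concrete, spectrally described quantity. Either proof suffices for the lemma as stated.
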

\begin{proof}
  For this, we need Strassen's spectral characterization of the asymptotic rank~\cite{Strassen:88-asymptotic}.
  Strassen defines the asymptotic spectrum of tensors $X$
  as the set of all $\leq$-monotone, $\otimes$-multiplicative, $\oplus$-additive maps $\xi$ from tensors to positive reals such that $\xi(u \otimes v \otimes w) = 1$.
  Then~$X$ can be made into a compact (and Hausdorff) topological space such that the evaluation map $\xi \mapsto \xi(T)$ is continuous for all $T$, and
  \[
    \asymprank(T) = \max_{\xi \in X} \xi(T).
  \]
  For $\xi \in X$ we have
  \begin{equation*}
    \sqrt[n]{\xi(T_n)} = \sqrt[n]{\xi(S_1^{\otimes f_1(n)} \otimes S_2^{\otimes f_2(n)})}
    = \xi(S_1)^{\frac{f_1(n)}{n}} \xi(S_2)^{\frac{f_2(n)}{n}} \to \xi(S_1)^{a_1} \xi(S_2)^{a_2},
  \end{equation*}
  as $n\to \infty$.
  Because of compactness of $X$ this convergence is uniform in $\xi$.
  Therefore,
  \begin{equation*}
    \sqrt[n]{\asymprank(T_n)} = \sqrt[n]{\max_{\xi\in X} \xi(T_n)} \to \max_{\xi\in X} \xi(S_1)^{a_1} \xi(S_2)^{a_2},
  \end{equation*}
  as $n\to\infty$.
\end{proof}

\begin{definition}
  We call a sequence of tensors $\{T_n\}$ \emph{almost exponential} if the sequence
  $\sqrt[n]{\asymprank(T_n)}$ converges and $\sqrt[n]{F(T_n)}$ is bounded for each adequate $F$.
  We write $\asymprank(\{T_n\}) \coloneqq \lim_{n\to\infty} \sqrt[n]{\asymprank(T_n)}$
  and $F(\{T_n\}) \coloneqq \limsup_{n\to\infty} \sqrt[n]{F(T_n)}$.
\end{definition}

\begin{definition}[Asymptotic mixed method]
  Let $\{T_n\}$ be an almost exponential sequence of tensors with $\asymprank(\{T_n\}) \leq r$.
  Suppose we are given a collection of inequalities $\left<n, n, m\right>^{\oplus s} \leq T_k$
  where the values of $n$ are unbounded and $m \geq f(n)$ for some $f(n) = n^{p + o(1)}$.
  Then $\omega(p)$ is at most $\hat{\omega}(p) = \lim \inf \{k \log_n r - \log_n s\}$ where the
  limit is taken over all $k, n, s$ appearing in the collection of inequalities as $n \to \infty$.
  We say that $\hat{\omega}(p)$ is obtained by an \emph{asymptotic mixed $\{T_n\}$-method}.
  
  We say that the asymptotic mixed $\{T_n\}$-method is \emph{$\kappa$-catalytic} if in
  each inequality we have $s \geq C n^{\kappa}$ for some constant $C$.
\end{definition}

\begin{lemma}
  Asymptotic mixed methods give upper bounds on $\omega(p)$.
\end{lemma}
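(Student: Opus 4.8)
The plan is to run, almost verbatim, the upper‑bound argument of \cref{thm:approx}, but to feed it the fact that each term $T_k$ of an almost exponential sequence behaves like a $k$‑th tensor power up to a subexponential correction. Fix one inequality $\left<n,n,m\right>^{\oplus s} \leq T_k$ from the collection, with $m \geq f(n)$ and $f(n) = n^{p + o(1)}$. Since $\asymprank$ is monotone under restriction, $\asymprank(\left<n,n,m\right>^{\oplus s}) \leq \asymprank(T_k)$, and \cref{thm:tau} applied with the exponent $q \coloneqq \log_n m$ (so that $m = n^q \geq n^q$) gives $s\,n^{\omega(q)} \leq \asymprank(T_k)$, hence
\[
\omega(q) \leq \log_n \asymprank(T_k) - \log_n s .
\]
Because $m \geq f(n)$ and $\omega(\cdot)$ is nondecreasing, the same bound holds for $\omega(\log_n f(n))$, and $\log_n f(n) = p + o(1)$ as $n \to \infty$. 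This is exactly the per‑reduction estimate used in the proof of \cref{thm:approx}, now with the single tensor $T$ replaced by $T_k$.

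The second step is to replace $\log_n \asymprank(T_k)$ by the quantity $k\log_n r$ occurring in $\hat\omega(p)$. Writing $L_k \coloneqq \asymprank(T_k)^{1/k}$, the almost exponential hypothesis (which the two preceding lemmas establish for the sequences $S_1^{\otimes f_1(n)} \otimes S_2^{\otimes f_2(n)}$) gives $L_k \to \asymprank(\{T_n\}) \leq r$, so for every $\eps > 0$ one has $\asymprank(T_k) \leq (r+\eps)^k$ once $k$ is large enough. The index $k$ does tend to infinity along the collection: a restriction $\left<n,n,m\right>^{\oplus s} \leq T_k$ forces the dimensions of $T_k$ to be at least those of $\left<n,n,m\right>^{\oplus s}$, and the dimensions of $T_k = S_1^{\otimes f_1(k)} \otimes S_2^{\otimes f_2(k)}$ are bounded for bounded $k$ while those of $\left<n,n,m\right>$ grow with $n$; hence $\log n = O(k)$, and since $n$ is unbounded, $k \to \infty$. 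Consequently, along the collection and for each fixed $\eps>0$, eventually $\omega(\log_n f(n)) \leq k\log_n(r+\eps) - \log_n s$. Taking the $\liminf$ over the collection as $n \to \infty$ and using continuity of $\omega$ (so that $\liminf \omega(\log_n f(n)) = \omega(p)$ since $\log_n f(n) \to p$) yields $\omega(p) \leq \liminf\{k\log_n(r+\eps) - \log_n s\}$, and letting $\eps \to 0$ gives $\omega(p) \leq \liminf\{k\log_n r - \log_n s\} = \hat\omega(p)$. For the $\kappa$-catalytic statement one additionally uses $\log_n s \geq \kappa + o(1)$ and the same algebraic manipulation of the two fractions as in \cref{thm:general,thm:approx}.

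The step I expect to be the only genuinely delicate one is the passage from $\asymprank(T_k)$ to $(r+\eps)^k$: one must be certain that this replacement does not hide a term comparable to $\log n$ in the resulting bound on $\omega$. This is controlled precisely because $k \to \infty$ along the collection forces $L_k$ into an arbitrarily small multiplicative neighbourhood of its limit before the $\liminf$ is taken, so the discrepancy between $\log_n \asymprank(T_k)$ and $k\log_n r$ is $o(k)\log_n r$; and in the regime where $k$ grows much faster than $\log n$ the bound $k\log_n r - \log_n s$ is large (since the dimension constraint forces $\log_n s = O(k/\log n)$ as well, so it cannot cancel the growth), so such reductions do not affect the $\liminf$ and may be discarded. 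This is the same bookkeeping that makes the $o(k)$ and $o(1)$ corrections disappear in the proof of \cref{thm:approx}; everything else is routine.
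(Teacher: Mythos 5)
Your proof follows the same route as the paper's: the per-reduction bound $\omega(\log_n m)\leq \log_n\asymprank(T_k)-\log_n s$ from \cref{thm:tau}, the observation that $k\to\infty$ because a fixed $T_k$ admits only boundedly large restrictions while $n$ is unbounded, and then passing to the limit using continuity of $\omega$ and the convergence $\asymprank(T_k)^{1/k}\to\asymprank(\{T_k\})\leq r$. Your closing discussion of the regime $k\gg\log n$ is extra bookkeeping the paper does not spell out, but it does not change the approach; the argument is correct.
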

\begin{proof}
  Note that for a fixed tensor $T_k$ there are only a finite number of restrictions
  $\left<n,n,m\right>^{\oplus s} \leq T_k$ possible as the left tensor is of format
  $sn^2 \times snm \times snm$, which should be no greater than the format of $T_k$.
  Thus, because in an asymptotic mixed method the set of values of $n$ is unbounded,
  so is the set of values of $k$.

  For one restriction $\left<n, n, m\right>^{\oplus s} \leq T_k$ we have the inequality
  $s n^{\omega(\log_n m)} \leq \asymprank(T_k)$, that is, $\omega(\log_n m) \leq \log_n \asymprank(T_k) - \log_n s$.
  Since $\log_n m = p + o(1)$ and $\omega$ is a continuous function
  and $\asymprank(T_k) = (\asymprank(\{T_k\}) + o(1))^k$,
  we get in the limit the required inequality.
\end{proof}

\begin{theorem}
  \label{thm:mixed}
  Any upper bound $\hat{\omega}(p)$ obtained by an
  asymptotic mixed $\{T_n\}$-method satisfies %
  \begin{equation*}
    \hat{\omega}(p) \geq \frac{\log F(\left<2, 2, 2^p\right>) \log \asymprank(\{T_n\})}{\log F(\{T_n\})}
  \end{equation*}
  and for $\kappa$-catalytic methods,
  \begin{equation*}
    \hat{\omega}(p) \geq \frac{\log F(\left<2, 2, 2^p\right>) \log \asymprank(\{T_n\})}{\log F(\{T_n\})} + \kappa \left(\frac{\log \asymprank(\{T_n\})}{\log F(\{T_n\})} - 1\right).
  \end{equation*}
\end{theorem}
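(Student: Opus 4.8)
The plan is to follow the proof of \cref{thm:approx} essentially verbatim, replacing the single tensor power $T^{\otimes k}$ everywhere by the sequence member $T_k$; the two lemmas preceding the definition of ``almost exponential'', together with that definition, are precisely what licenses this substitution in the limit.

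First I would fix one reduction $\left<n, n, m\right>^{\oplus s} \leq T_k$ from the given collection, with $m \geq f(n) = n^{p + o(1)}$, which (as in the lemma asserting that asymptotic mixed methods give true upper bounds) records the upper bound $\hat{\omega}_{k,s,n,m} = \log_n \asymprank(T_k) - \log_n s$ on $\omega(\log_n m)$. Applying $F$ to this reduction and using $\leq$-monotonicity and $\oplus$-additivity (\cref{basicprops}) gives $s\, F(\left<n,n,m\right>) = F(\left<n,n,m\right>^{\oplus s}) \leq F(T_k)$. Taking $\log_n$ and rearranging exactly as in the proof of \cref{thm:general} yields
\[
\frac{\hat{\omega}_{k,s,n,m} + \log_n s}{\log_n F(\left<n,n,m\right>) + \log_n s} = \frac{\log_n \asymprank(T_k)}{\log_n F(\left<n,n,m\right>) + \log_n s} \geq \frac{\log \asymprank(T_k)}{\log F(T_k)},
\]
where in the last ratio the change of logarithm base cancels. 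Since $F \leq \asymprank$ (\cref{basicprops}) both sides of the last inequality exceed $1$, so for every $0 \leq A \leq \log_n s$ the elementary ``mediant'' manipulation used in \cref{thm:approx} gives $\frac{\hat{\omega}_{k,s,n,m} + A}{\log_n F(\left<n,n,m\right>) + A} \geq \frac{\log \asymprank(T_k)}{\log F(T_k)}$.

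Next I would let $n \to \infty$ along the reductions of the collection; since for a fixed $k$ only finitely many reductions into $T_k$ are possible, $k \to \infty$ as well. Here the almost-exponential hypothesis enters: $\asymprank(T_k)^{1/k} \to \asymprank(\{T_n\})$ and $\limsup_k F(T_k)^{1/k} \leq F(\{T_n\})$, so $\log \asymprank(T_k)/\log F(T_k) \geq \log \asymprank(\{T_n\})/\log F(\{T_n\}) - o(1)$; \cref{lem:powers} together with the monotonicity of $F(\left<2,2,2^q\right>)$ in $q$ coming from \cref{quasi} gives $\log_n F(\left<n,n,m\right>) \geq \log F(\left<2,2,2^p\right>) - o(1)$; and in the $\kappa$-catalytic case $\log_n s \geq \kappa - o(1)$. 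Taking $A = 0$ in general and $A = \kappa$ in the catalytic case, passing to the limit, and using $\hat{\omega}(p) \geq \liminf \hat{\omega}_{k,s,n,m}$, I solve the resulting inequality for $\hat{\omega}(p)$ — in the catalytic case via the rearrangement $\frac{\hat{\omega}(p) + \kappa}{\log F(\left<2,2,2^p\right>) + \kappa} \geq \frac{\log \asymprank(\{T_n\})}{\log F(\{T_n\})}$ already carried out in \cref{thm:general} — to obtain the two claimed bounds.

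Since the algebra is literally that of \cref{thm:general,thm:approx}, the only genuinely new point — and the step I would be most careful with — is the bookkeeping of the nested limiting processes: the $\liminf$ over $k$ defining $\hat{\omega}(p)$ must be interleaved with the $\limsup$ over all indices defining $F(\{T_n\})$ and with the convergence of $\asymprank(T_k)^{1/k}$. This is harmless because a $\limsup$ along any subsequence is at most the $\limsup$ along the whole sequence, so along whatever sequence of reductions realises the $\liminf$ one still has $F(T_k)^{1/k} \leq F(\{T_n\}) + o(1)$ and $\asymprank(T_k)^{1/k} = \asymprank(\{T_n\}) + o(1)$; one also tacitly needs $\log F(\{T_n\}) > 0$ for the bound to be meaningful, which is implicit in the statement. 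That asymptotic mixed methods really produce upper bounds on $\omega(p)$ — via continuity of $\omega$ and Strassen's spectral formula for $\asymprank$ — has already been established in the lemmas above, so nothing further is needed there.
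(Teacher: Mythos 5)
Your proposal is correct and follows essentially the same route as the paper's own proof: fix one restriction $\left<n,n,m\right>^{\oplus s} \leq T_k$, derive the same ratio and mediant inequalities as in \cref{thm:general,thm:approx}, note that $k\to\infty$ with $n$ by the finiteness of restrictions into a fixed $T_k$, and pass to the limit using the almost-exponential convergence $\sqrt[k]{\asymprank(T_k)} \to \asymprank(\{T_n\})$ and $\limsup_k \sqrt[k]{F(T_k)} \leq F(\{T_n\})$. Your extra care about interleaving the $\liminf$ defining $\hat{\omega}(p)$ with the $\limsup$ defining $F(\{T_n\})$ is sound and matches what the paper does implicitly.
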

\begin{proof}
Recall that for a fixed $T_k$ the number of possible restrictions
  $\left<n,n,m\right>^{\oplus s} \leq T_k$ is finite, as the left-hand side tensor
  has format $sn^2 \times snm \times snm$, which should be no greater than that of $T_k$.
  Therefore, as $n$ tends to infinity, so does $k$.

  Consider now one restriction $\left<n,n,m\right>^{\oplus s} \leq T_k$.
  It gives the upper bound $\hat{\omega}_{k, s, n, m} \coloneqq \log_n \asymprank(T_k) - \log_n s$
  on $\omega(p + o(1))$.
  As in previous theorems, we have
  \begin{equation*}
    \frac{\hat{\omega}_{k,s,n,m} + \log_n s}{\log_n F(\left<n, n, m\right>) + \log_n s} \geq \frac{\log \asymprank(T_k)}{\log F(T_k)}
  \end{equation*}
  and
  \begin{equation*}
    \frac{\hat{\omega}_{k,s,n,m} + A}{\log_n F(\left<n, n, m\right>) + A} \geq \frac{\hat{\omega}(p)_{k,s,n,m} + \log_n s}{\log_n F(\left<n, n, m\right>) + \log_n s}
  \end{equation*}
  for any $A$ such that $0 \leq A \leq \log_n s$.
 
  Consider the behaviour of the involved quantities as $n$ and $k$ tend to infinity.
  Since $m \geq n^{p + o(1)}$, $\log_n F(\left<n, n, m\right>) \geq \log F(\left<2,2,2^p\right>) + o(1)$.
  For a catalytic method, we can choose $A = \kappa + o(1)$ such that $\log_n s \geq A$, and in general, we set $A = 0$.
  Since $\sqrt[k]{\asymprank(T_k)} = \asymprank(\{T_k\}) + o(1)$ and $\sqrt[k]{F(T_k)} \leq F(\{T_k\})+o(1)$,
  we have
  \begin{equation*}
    \frac{\log \asymprank(T_k)}{\log F(T_k)} \geq \frac{\log \asymprank(\{T_k\})}{\log F(\{T_k\})} + o(1).
  \end{equation*}
  And finally, $\lim \inf \hat{\omega}_{k,s,n,m}$ is $\hat{\omega}(p)$.
  In the limit, we get the required inequalities.
\end{proof}

\subsection{Barriers for the dual exponent $\alpha$}

Recall that there is an element $0 < p < 1$ such that $\omega(p) = 2$. We denote by $\alpha$ the largest such~$p$. We call $\alpha$ the dual exponent of matrix multiplication.
From the barrier theorem \autoref{thm:approx} for upper bounds on  $\omega(p)$ we can prove a barrier for lower bounds on $\alpha$, as we will now explain.

\begin{theorem}\label{th:barrier-alpha}
    For any $0 < p < 1$, if an asymptotic $T$-method proves that $p$ is a lower bound on $\alpha$, then
    \[
	p \leq \frac{2 \log F(T)}{\log \asymprank(T) \log F(\left<1,1,2\right>)} - \frac{\log F(\left<2,2,1\right>)}{\log F(\left<1,1,2\right>)}
    \]
    for all adequate $F$ such that $\log F(\left<1,1,2\right>) \neq 0$.
\end{theorem}
\begin{proof}
From the definition of $\alpha$ we see that, for any $0 < p < 1$, if an asymptotic $T$-method can prove that $p$ is a lower bound on $\alpha$, then it can prove the upper bound $\omega(p) \leq 2$.
Applying the barrier theorem (\autoref{thm:approx}), this implies
\[
  \frac{\log F(\langle 2, 2, 2^{p}\rangle) \log \asymprank(T)}{\log F(T)} \leq 2
\]
for all adequate $F$.
Using \autoref{quasi}, we obtain the claim. %
\end{proof}

\begin{remark}\label{alphaomega}
We note in passing that the matrix multiplication exponent $\omega$ and the dual exponent $\alpha$ are related via the inequality $\omega + \tfrac{\omega}{2}\alpha \leq 3$. Namely, from $\langle \ceil{n^\alpha}, n, n\rangle \leq \langle n^{2 + o(1)}\rangle$, $\langle n,\ceil{n^\alpha}, n\rangle \leq \langle n^{2 + o(1)}\rangle$ and $\langle n,n,\ceil{n^\alpha}\rangle \leq \langle n^{2 + o(1)}\rangle$ it follows that $\langle n^{2+\alpha}, n^{2+\alpha}, n^{2+\alpha}\rangle \leq \langle n^{6 + o(1)}\rangle$. Therefore, $\omega \leq 6/(2+\alpha)$, and the claim follows.
\end{remark}

\section{Numerical computation of barriers}\label{sec:numerical}

We will in this section discuss how to numerically evaluate the barrier of \autoref{thm:approx} and \autoref{th:barrier-alpha}. For this we will use the upper support functionals as adequate maps. We will compute some explicit values of barriers. 

\subsection{Upper support functionals}
Our main tool is a family of maps called the upper support functionals~\cite{strassen1991degeneration}. To define them, we will use the following notation. For $n \in \NN$ let $[n] \coloneqq \{1, 2, \ldots, n\}$. For any finite set $A$ let $\prob(A)$ be the set of probability vectors on $A$. For finite sets~$A_1, A_2, A_3$ and $P \in \prob(A_1 \times A_2 \times A_3)$ let $P_i \in \prob(A_i)$ be the $i$th marginal of~$P$ for $i \in [3]$. Let~$H(P)$ denote the Shannon entropy of~$P$.
Let $\FF^{n \times n \times n}$ be the set of 3-tensors of dimension $n \times n \times n$, viewed as 3-dimensional arrays. For~$T \in \FF^{n \times n \times n}$ let $\supp(T) \subseteq [n]^3$ be the support of $T$. %

Let $T \in \FF^{n \times n \times n}$. Let $\theta = (\theta_1, \theta_2, \theta_3) \in \prob([3])$. The \emph{upper support functional} is defined as
\begin{equation}
\zeta^\theta(T) = \min_{S \cong T} \max_{P \in \prob(\supp(S))} 2^{\sum_{i\in [3]} \theta_i H(P_i)}
\end{equation}
where $S$ goes over all tensors that can be obtained from $T$ by a basis transformation, that is, $S = (A,B,C)\cdot T$ where $A,B,C$ are invertible linear maps.

\begin{lemma}\label{rectsupp} %
$\zeta^\theta(\langle a,b,c\rangle) = a^{\theta_1 + \theta_3} b^{\theta_1 + \theta_2} c^{\theta_2 + \theta_3}$.
\end{lemma}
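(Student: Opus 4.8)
The plan is to reduce the general matrix multiplication tensor $\langle a,b,c\rangle$ to the case $a=b=c=1$, for which $\zeta^\theta(\langle1,1,1\rangle)=1$ trivially, by exploiting multiplicativity. First I would recall from \cref{basicprops}(iii) that $\zeta^\theta$ is mamu-$\otimes$-multiplicative, i.e.\ $\zeta^\theta(\langle a,b,c\rangle\otimes\langle a',b',c'\rangle)=\zeta^\theta(\langle a,b,c\rangle)\zeta^\theta(\langle a',b',c'\rangle)$, together with the identity $\langle a,b,c\rangle\otimes\langle a',b',c'\rangle=\langle aa',bb',cc'\rangle$. Using this with the three ``elementary'' tensors $\langle a,1,1\rangle$, $\langle1,b,1\rangle$, $\langle1,1,c\rangle$ gives $\zeta^\theta(\langle a,b,c\rangle)=\zeta^\theta(\langle a,1,1\rangle)\zeta^\theta(\langle1,b,1\rangle)\zeta^\theta(\langle1,1,c\rangle)$, so it suffices to evaluate $\zeta^\theta$ on each of these three.

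The core computation is therefore $\zeta^\theta(\langle a,1,1\rangle)$. Here $\langle a,1,1\rangle=\sum_{i=1}^a x_{i}y_{1}z_{i}$ (after renaming indices), which as a $3$-tensor of dimension $a\times1\times a$ — or, padding with zeros, $a\times a\times a$ — has support $\{(i,1,i):i\in[a]\}$, a ``diagonal'' in the first and third coordinates sitting over a single value in the second. The key point is that this tensor is essentially unique up to basis change in a way that makes the minimization over $S\cong T$ trivial to handle: any basis transformation preserves the combinatorial structure enough that the optimal choice is the untransformed one, or at least that the value $\min_{S\cong T}\max_P$ equals $\max_P$ on the standard support. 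On the standard support, a probability vector $P$ on $\{(i,1,i)\}$ is just a probability vector $(p_i)$ on $[a]$; its first marginal is $(p_i)$, its second marginal is the point mass $\delta_1$ with $H(P_2)=0$, and its third marginal is again $(p_i)$. So $2^{\sum_i\theta_i H(P_i)}=2^{(\theta_1+\theta_3)H(p)}$, and maximizing over $p\in\prob([a])$ gives the uniform distribution, $H(p)=\log a$, hence $\zeta^\theta(\langle a,1,1\rangle)=a^{\theta_1+\theta_3}$. By the symmetric argument (permuting the roles of the three tensor factors), $\zeta^\theta(\langle1,b,1\rangle)=b^{\theta_1+\theta_2}$ and $\zeta^\theta(\langle1,1,c\rangle)=c^{\theta_2+\theta_3}$, and multiplying the three yields $\zeta^\theta(\langle a,b,c\rangle)=a^{\theta_1+\theta_3}b^{\theta_1+\theta_2}c^{\theta_2+\theta_3}$.

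The main obstacle is justifying that the outer minimization $\min_{S\cong T}$ does not decrease the value below what the standard support gives — i.e.\ that for every invertible $(A,B,C)$, the transformed tensor $S=\langle a,1,1\rangle\circ(A,B,C)$ still satisfies $\max_{P\in\prob(\supp S)}2^{\sum_i\theta_i H(P_i)}\ge a^{\theta_1+\theta_3}$. For $\langle a,1,1\rangle$ this is clean because the middle dimension is $1$: any nonzero vector in $\FF^a$ has support of size between $1$ and $a$, and one shows that after a generic basis change the support of $S$ still ``spans'' enough that a distribution achieving marginal entropies $(\log a, 0, \log a)$ (or better) exists — in fact for full matrix multiplication tensors this is exactly the content of the lower bound half of Strassen's evaluation of $\zeta^\theta$ on $\langle\ell,m,n\rangle$, which I would either cite from \cite{strassen1991degeneration} or reprove in this special low-dimensional case by a direct argument: the support of $\langle a,1,1\rangle$ is already a ``combinatorial diagonal'' in coordinates $1$ and $3$, and such structure is preserved (up to a bijection on indices, which leaves entropies unchanged) under the relevant transformations, so the minimum is attained at $S=T$. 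With that in hand, the upper bound $\zeta^\theta(\langle a,1,1\rangle)\le a^{\theta_1+\theta_3}$ is immediate by choosing $S=T$ and the standard support computation above, and the two bounds match.
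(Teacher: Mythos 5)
Your proposal is correct, and since the paper's own ``proof'' is just the sentence ``one verifies this by a direct computation,'' you are supplying an argument the paper omits. Your route --- factor $\langle a,b,c\rangle = \langle a,1,1\rangle\otimes\langle1,b,1\rangle\otimes\langle1,1,c\rangle$, invoke mamu-$\otimes$-multiplicativity from \cref{basicprops}(iii), and evaluate $\zeta^\theta$ on the three elementary tensors --- is a clean organization; the most literal ``direct computation'' would instead bound $\max_P \sum_i\theta_i H(P_i)$ on the full support $\{((i,j),(j,k),(k,i))\}$ of $\langle a,b,c\rangle$ directly (each marginal entropy is at most the log of the corresponding index-set size, and the uniform distribution attains all three bounds simultaneously, giving the same exponents without any appeal to multiplicativity). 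Either way the genuine content is the lower-bound half of the $\min_{S\cong T}\max_P$, which you correctly flag: your phrasing ``the minimum is attained at $S=T$'' is not itself the argument, and ``the combinatorial structure is preserved under basis change'' is not literally true (the support can grow arbitrarily). The correct justification for $\langle a,1,1\rangle$ is that, the middle space being one-dimensional, $S$ is determined by an invertible $a\times a$ matrix, whose support must contain a perfect matching $\{(i,\sigma(i))\}$ (a nonzero term in the determinant expansion); the uniform distribution on that matching has marginal entropies $(\log a, 0, \log a)$, so $\max_P \geq a^{\theta_1+\theta_3}$ for every $S\cong T$. With that filled in (or cited from Strassen, as you propose), the proof is complete.
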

\begin{proof}
This is computed in \cite[Section 6]{strassen1991degeneration}.
\end{proof}

\begin{lemma}\label{basicprops}
The upper support functionals $\zeta^\theta$ are adequate.
\end{lemma}
\begin{proof}
The conditions of \autoref{def:adequate} are found in the following places.
(i), (ii), (iv) are in \cite[Theorem 2.8]{strassen1991degeneration}. (iii) follows from \autoref{rectsupp}. (v)~follows from the fact that every upper support functional is at most the maximum of the flattening ranks of the tensor \cite[page 135]{strassen1991degeneration}, and the flattening ranks lower bound the asymptotic rank.
\end{proof}

\begin{remark}More is known about the support functionals than \autoref{rectsupp} and \autoref{basicprops}. For example, they are multiplicative not only on the matrix multiplication tensors, but also on a larger family of tensors called oblique tensors \cite{strassen1991degeneration}.
\end{remark}

We obtain from \autoref{thm:approx} and \autoref{quasi} that
any upper bound $\hat{\omega}(p)$ on~$\omega(p)$ obtained by asymptotic $T$-methods must satisfy
  \begin{equation*}
    \hat{\omega}(p) \geq \frac{\log \zeta^\theta(\left<2, 2, 2^p\right>)}{\log \zeta^\theta(T)} \log \asymprank(T),
  \end{equation*}
which, using \autoref{rectsupp}, gives
\begin{equation}\label{eq:gen-bar}
\hat{\omega}(p) \geq
\max_\theta \frac{2 \theta_1 + \theta_3 + \theta_2 + p(\theta_2 + \theta_3)}{\log \zeta^\theta(T)} \log \asymprank(T).
\end{equation}

\subsection{Barriers for the Coppersmith--Winograd tensors} %

We know that $\asymprank(\CW_q) = q+2$. 
From \eqref{eq:gen-bar},
the barrier we get for $\CW_q$ is
\begin{align*}
\hat{\omega}(p) &\geq \max_\theta \frac{2\theta_1 + (p+1)(\theta_2 + \theta_3)}{\log_2 \zeta^\theta(\CW_q)} \log_2 \asymprank(\CW_q)\\ 
&\geq \max_\theta \frac{2\theta_1 + (p+1)(\theta_2 + \theta_3)}{\max_{P} \sum_{i=1}^3 \theta_i H(P_i)} \log_2 (q+2),
\end{align*}
where $P \in \prob(\supp(\CW_q))$ goes over all probability distributions on the support of $\CW_q$, which we recall is given by 
\[
\supp(\CW_q) = \{(i,i,0), (i,0,i), (0,i,i) : i \in [q]\} \cup \{(0,0,q+1), (0,q+1,0), (q+1,0,0)\}.
\]
This is easy to evaluate numerically, and we give explicit values in \autoref{subsec:values}.

\begin{remark}\label{rem:small-cw}
 In \autoref{subsec:values} we will for comparison, besides numerical values for the above barrier for $\CW_q$ also provide such values for the ``little'' Coppersmith--Winograd tensor $\cw_q$, which is the zero-one tensor with support $\supp(\cw_q) = \{(i,i,0), (i,0,i), (0,i,i) : i \in [q]\}.$ Unlike for $\CW_q$, the asymptotic rank of $\cw_q$ is not known and is between $q+1$ and $q+2$, and it is well-known that if $\asymprank(\cw_2) = 3$, then $\omega = 2$.
\end{remark}

\begin{remark}
We briefly discuss the standard method for using symmetry to simplify the computation of the support support functional $\zeta^\theta(\CW_q)$, which works similarly for other tensors with symmetry.
We are interested in computing
$\max_{P} \sum_{i=1}^3 \theta_i H(P_i)$,
where $P$ goes over all probability distributions on the support of $\CW_q$.
The symmetric group $S_q$ acts naturally on the support of $\CW_q$ by permuting the label set $[q]$. Suppose $P$ is a feasible point for the maximization. Then $\pi \cdot P$ for any $\pi \in S_q$ is feasible as well and has the same value. Thus the symmetrized point
$\frac{1}{\abs[0]{S_q}} \sum_{\pi \in S_q} \pi \cdot P$
is feasible and has at least the same value or better, by concavity of the Shannon entropy~$H$. We may thus assume that $P$ is constant on the six orbits of $\supp(\CW_q)$ under the action of~$S_q$, which are the sets $\{(i,i,0) : i \in [q]\}$, $\{(i,0,i) : i \in [q]\}$, $\{(0,i,i):i\in[q]\}$, $\{(0,0,q+1)\}$, $\{(0,q+1,0)\}$, and $\{(q+1,0,0)\}$. %

To make this concrete, let $P$ be the probability distribution that gives  probability $p_1$ to $(0,i,i)$, probability~$p_2$ to~$(i,0,i)$, probability~$p_3$ to $(i,i,0)$ and probability $r_1$ to $(q+1,0,0)$, probability~$r_2$ to $(0,q+1,0)$ and probability $r_3$ to $(0,0,q+1)$
where $p_1,p_2,p_3,r_1,r_2,r_3 \geq 0$ and $q p_1  + q p_2 + q p_3 + r_1 + r_2 + r_3 = 1$. The marginal probability vectors are
\begin{gather*}
P_1 = (q p_1 + r_2 + r_3, p_2 + p_3, \ldots,  p_2 + p_3, r_1)\\
P_2 = (q p_2 + r_1 + r_3, p_1 + p_3, \ldots,  p_1 + p_3, r_2)\\
P_3 = (q p_3 + r_1 + r_2, p_1 + p_2, \ldots,  p_1 + p_2, r_3).
\end{gather*}
By the grouping property of Shannon entropy, we have
\begin{gather*}
H(P_1) = (1-q p_1 - r_2 - r_3) (\log_2(q) + h(r_1)) + h(q p_1 + r_2 + r_3)\\
H(P_2) = (1-q p_2 - r_1 - r_3) (\log_2(q) + h(r_2)) + h(q p_2 + r_1 + r_3)\\
H(P_3) = (1-q p_3 - r_1 - r_2) (\log_2(q) + h(r_3)) + h(q p_3 + r_1 + r_2)
\end{gather*}
and
$\log_2 \zeta^\theta(\CW_q) \leq \max_{p_j, r_j} \sum_{i=1}^3 \theta_i H(P_i)$,
where $p_1, p_2, p_3, r_1, r_2, r_3 \geq 0$ and $qp_1 + qp_2 + qp_3 + r_1 + r_2 + r_3 = 1$.
\end{remark}

\subsection{Barriers for the dual exponent via Coppersmith--Winograd tensors}

From \autoref{th:barrier-alpha}, using the support functionals and its properties (\autoref{rectsupp}, \autoref{basicprops}), we get the following barrier for any lower bound $p$ on the dual exponent $\alpha$ via the intermediate tensor $\CW_q$,
\begin{align*}
p &\leq \min_\theta\, \biggl( \frac{2 \log_2 \zeta^\theta(\CW_q)}{\log_2 \asymprank(\CW_q) \log_2 \zeta^\theta(\left<1,1,2\right>)} - \frac{\log_2 \zeta^\theta(\left<2,2,1\right>)}{\log_2 \zeta^\theta(\left<1,1,2\right>)} \biggr)\\
&= \min_\theta\, \biggl(\frac{2 \max_P \sum_{i=1}^3 \theta_i H(P_i)}{\log_2(q+2) \cdot (\theta_2 + \theta_3)} - \frac{1 + \theta_1}{\theta_2 + \theta_3}\biggr).
\end{align*}
We give numerical evaluations in \autoref{subsec:values}.

\subsection{Some explicit values}\label{subsec:values}
As an illustration, we give in \autoref{tab1} the barriers for upper bounds on $\omega(2)$ via asymptotic $\CW_q$-methods for small $q$ by numerical optimization. We provide code to perform this optimization in \autoref{code}. Optimal values were obtained for $\theta$ with~$\theta_2 = \theta_3$. In \autoref{tab2} we give similar barriers for $\cw_q$. In \autoref{tab3} we give barriers for the dual exponent.

\begin{table}[H]
\centering
\begin{tabular}{lll}
\toprule
$q$ & $\hat{\omega}(2) \geq$ & $\theta_1$\\ %
\midrule
2& 3.0626& 0.096\\
3& 3.0726& 0.106\\
4& 3.0831& 0.116\\
5& 3.0936& 0.126\\
6& 3.1039& 0.136\\
7& 3.1138& 0.144\\
8& 3.1232& 0.152\\
9& 3.1323& 0.159\\
10& 3.1409& 0.165\\
11& 3.1491& 0.171\\
12& 3.1569& 0.176\\
13& 3.1643& 0.181\\
14& 3.1714& 0.185\\
\bottomrule
\end{tabular}
\caption{Barriers for upper bounds on $\omega(2)$ via asymptotic $\CW_q$-methods for small $q$.}
\label{tab1}
\end{table}

\begin{table}[H]
\centering
\begin{minipage}{3cm}
\begin{tabular}{lll}
\toprule
$q$ & $\hat{\omega}(2) \geq$ & $\theta_1$\\ %
\midrule
2& 3.7855& 0.000\\
3& 3.4828& 0.000\\
4& 3.3398& 0.000\\
5& 3.2582& 0.007\\
6& 3.2141& 0.054\\
7& 3.1920& 0.085\\
8& 3.1812& 0.107\\
9& 3.1767& 0.125\\
10& 3.1758& 0.138\\
11& 3.1772& 0.149\\
12& 3.1799& 0.158\\
13& 3.1835& 0.166\\
14& 3.1876& 0.173\\
\bottomrule
\end{tabular}
\end{minipage}
\hspace{1cm}
\begin{minipage}{3cm}
\begin{tabular}{lll}
\toprule
$q$ & $\hat{\omega}(2) \geq$ & $\theta_1$\\ %
\midrule
2& 3.0& 0.000\\
3& 3.0000& 0.000\\
4& 3.0000& 0.000\\
5& 3.0001& 0.007\\
6& 3.0077& 0.054\\
7& 3.0209& 0.085\\
8& 3.0357& 0.107\\
9& 3.0504& 0.125\\
10& 3.0646& 0.138\\
11& 3.0781& 0.149\\
12& 3.0906& 0.158\\
13& 3.1024& 0.166\\
14& 3.1134& 0.173\\
\bottomrule
\end{tabular}
\end{minipage}
\caption{Barriers for upper bounds on $\omega(2)$ via asymptotic $\cw_q$-methods for small $q$. On the left we assume that $\asymprank(\cw_q) = q+2$. On the right we assume only $\asymprank(\cw_q) \geq q+1$. Note that in that case, for $q = 2$, there is no barrier. (This is not surprising, since proving $\asymprank(\cw_2) = 3$ would imply~$\omega=2$). For $q = 3$ and $q=4$ it can be seen with more precision that the barrier is $>3$.}
\label{tab2}
\end{table}

\begin{table}[H]
\centering
\begin{tabular}{ll}
\toprule
$q$ & barrier on $\alpha$ \\ %
\midrule
2& 0.6218\\
3& 0.5998\\
4& 0.5777\\
5& 0.5583\\
6& 0.5408\\
7& 0.5259\\
8& 0.5129\\
9& 0.5001\\
10& 0.4914\\
11& 0.4772\\
12& 0.4692\\
13& 0.4614\\
14& 0.4529\\
\bottomrule
\end{tabular}
\caption{Barriers for the dual exponent $\alpha$ via $\CW_q$ for small $q$, using $\theta_1 = 0.999999$ and $\theta_2 = \theta_3$.}
\label{tab3}
\end{table}

\section*{Acknowledgements}
We thank Harold Nieuwboer for the code in \autoref{code}.

\vspace{0.5em}
\noindent
MC and VL were supported by VILLUM FONDEN via the QMATH Centre of Excellence under Grant No.~10059 and the European Research Council (Grant agreement No. 818761).

\vspace{0.5em}
\noindent
FLG was supported by JSPS KAKENHI grants Nos.~JP15H01677, JP16H01705, JP16H05853, JP19H04066 and by the MEXT Quantum Leap Flagship Program (MEXT Q-LEAP) grant No.~JPMXS0118067394.

\vspace{0.5em}
\noindent
JZ was supported by %
National Science Foundation under Grant No.~DMS-1638352 and Dutch Research Council (NWO) Veni grant VI.Veni.212.284..

\bibliographystyle{alphaurl}
\bibliography{barriers}

\newpage
\appendix
\section{Python code}\label{code}
The Python program below computes the values given in \autoref{tab1}, \autoref{tab2} and \autoref{tab3}. This uses the convex optimization package \verb!cvxpy!~\cite{diamond2016cvxpy}.

\begin{verbatim}
import numpy as np
import cvxpy as cp #we used version 1.4.2
cpz = cp.Constant(0)

def log_support_functional(shape, support, theta):
    n = len(shape)
    var = {w: cp.Variable() for w in support}
    constraints = []
    for v in var.values():
        constraints.append(v >= 0)
    constraints.append(sum(v for v in var.values()) == 1)

    entropies = []
    pdict = dict()
    for i in range(n):
        ent_i = cpz
        for j in range(shape[i]):
            pij = sum([v for (w, v) in var.items() if w[i] == j], start=cpz)
            pdict[(i,j)] = pij
            ent_i += cp.entr(pij)
        entropies.append(ent_i)
    objective = 1/np.log(2) * sum(theta[i] * entropies[i] for i in range(n))
    problem = cp.Problem(cp.Maximize(objective), constraints)
    problem.solve() 
    
    return problem.value


# Table 1: CW

p = 2
def f(q, t1):
    t2 = (1.0 - t1)/2
    t3 = 1.0 - t1 - t2
    s = [(0,i,i) for i in range(1, q+1)] \
      + [(i,0,i) for i in range(1, q+1)] \
      + [(i,i,0) for i in range(1, q+1)] \
      + [(0,0,q+1), (0,q+1,0),(q+1,0,0)]
    v = (2 * t1 + (p+1)*(t2 + t3)) \
      / log_support_functional((q+2,q+2,q+2), s, (t1, t2, t3)) * np.log2(q+2)
    return v

for q in range(2, 15):
    m = max([(f(q, 0.001 * t1), 0.001 * t1) for t1 in range(1,201)])
    print(q, str(m[0])[:6], str(m[1])[:5])
\end{verbatim}

\newpage
\begin{verbatim}
# Table 2 right: cw

p = 2
def f(q, t1):
    t2 = (1.0 - t1)/2
    t3 = 1.0 - t1 - t2
    s = [(0,i,i) for i in range(1, q+1)] \
      + [(i,0,i) for i in range(1, q+1)] \
      + [(i,i,0) for i in range(1, q+1)]
    v = (2 * t1 + (p+1)*(t2 + t3)) \
      / log_support_functional((q+1,q+1,q+1), s, (t1, t2, t3)) * np.log2(q+1)
    return v

for q in range(2, 15):
    m = max([(f(q, 0.001 * t1), 0.001 * t1) for t1 in range(0,201)])
    print(q, str(m[0])[:6], str(m[1])[:5])
\end{verbatim}

\begin{verbatim}

# Table 2 left: cw using best-known upper bound on asymptotic rank

p = 2
def f(q, t1):
    t2 = (1.0 - t1)/2
    t3 = 1.0 - t1 - t2
    s = [(0,i,i) for i in range(1, q+1)] \
      + [(i,0,i) for i in range(1, q+1)] \
      + [(i,i,0) for i in range(1, q+1)]
    v = (2 * t1 + (p+1)*(t2 + t3)) \
      / log_support_functional((q+1,q+1,q+1), s, (t1, t2, t3)) * np.log2(q+2)
    return v

for q in range(2, 15):
    m = max([(f(q, 0.001 * t1), 0.001 * t1) for t1 in range(0,201)])
    print(q, str(m[0])[:6], str(m[1])[:5])
\end{verbatim}

\begin{verbatim}

# Table 3: dual exponent

for q in range(2,15):
    s = [(0,i,i) for i in range(1, q+1)] \
      + [(i,0,i) for i in range(1, q+1)] \
      + [(i,i,0) for i in range(1, q+1)] \
      + [(0,0,q+1), (0,q+1,0),(q+1,0,0)]
    def g(t1):
        t2 = (1 - t1)/2
        t3 = 1 - t1 - t2
        v = 2*log_support_functional((q+2,q+2,q+2), s, (t1, t2, t3)) \
          / (np.log2(q+2)*(t2+t3))  - (2 * t1 + t2 + t3) / (t2 + t3)
        return v

    print(q, str(N(ceil(g(0.999999)*10000)/10000))[:6])
\end{verbatim}

\newpage

\noindent
\textbf{Matthias Christandl}\\
University of Copenhagen\\
Universitetsparken 5, 2100 Copenhagen Ø, Denmark\\
Email: \href{mailto:christandl@math.ku.dk}{christandl@math.ku.dk}\\[1em]
\textbf{François Le Gall}\\
Nagoya University\\
Furocho, Chikusaku, Nagoya Aichi 464-8602, Japan\\
Email: \href{mailto:legall@math.nagoya-u.ac.jp}{legall@math.nagoya-u.ac.jp}\\[1em]
\textbf{Vladimir Lysikov}\\
Ruhr University Bochum\\
Universitätsstraße 150, 44801 Bochum, Germany\\
Email: \href{mailto:vladimir.lysikov@rub.de}{vladimir.lysikov@rub.de}\\[1em]
\textbf{Jeroen Zuiddam}\\
University of Amsterdam\\ Science Park 107, Amsterdam, Netherlands\\
Email: \href{mailto:j.zuiddam@uva.nl}{j.zuiddam@uva.nl}

\end{document}